\documentclass[titlepage, a4paper, 11pt]{article}

\usepackage[utf8]{inputenc}
\usepackage{amsmath}
\usepackage{amsfonts}
\usepackage{amssymb}
\usepackage{amsthm}
\usepackage{braket}
\usepackage{mathtools}
\usepackage{commath}
\usepackage{lmodern} 
\usepackage{diagbox}
\usepackage{enumerate}
\usepackage[shortlabels]{enumitem}
\usepackage[nodisplayskipstretch]{setspace}
\usepackage{courier}
\usepackage{graphicx}
\usepackage{geometry}
\usepackage{verbatim}
\usepackage[frozencache, cachedir=.]{minted} 

%\onehalfspacing
\everydisplay\expandafter{\the\everydisplay\setstretch{1}}

\newtheorem{theorem}{Theorem}[section]
\newtheorem{corollary}{Corollary}[theorem]
\newtheorem{lemma}[theorem]{Lemma}
\newtheorem{definition}[theorem]{Definition}

\DeclareMathOperator{\Tr}{Tr}

\title{Numerical Methods for Quantum Spin Dynamics}
\author{Danny Goodacre\\[0.6cm]{\small Supervisor: Dr. Pranav Singh}}
\date{April 2022}

\begin{document}

\maketitle

\pagenumbering{roman}
\section*{Abstract}

This report is concerned with the efficiency of numerical methods for simulating quantum spin systems. We aim to implement an improved method for simulation with a time-dependent Hamiltonian and behaviour which displays chirped pulses at a high frequency.

Working in the density matrix formulation of quantum systems, we study evolution under the Liouville-von Neumann equation in the Liouvillian picture, presenting analysis of and benchmarking current numerical methods. The accuracy of existing techniques is assessed in the presence of chirped pulses.

We discuss the Magnus expansion and detail how a truncation of it is used to solve differential equations. The results of this work are implemented in the Python package MagPy to provide a better error-to-cost ratio than current approaches allow for time-dependent Hamiltonians. Within this method we also consider how to approximate the computation of the matrix exponential.

\subsection*{Motivation}

The spin of a particle is an entirely quantum mechanical quantity
that has no classical analogue. Accurately predicting the dynamics of
spins systems is of great importance in applications such as quantum
computing and nuclear magnetic resonance imaging (NMR/MRI) \cite{spins, spins2}. Our work is focused on systems displaying highly oscillatory chirped pulses (HOCP), which are prevalent in nuclear spin dynamics.

Simulating larger systems (2-5 spins) is common in spectroscopy and quickly becomes prohibitively expensive using current techniques, motivating the development of a method which is more computationally efficient.

\subsection*{Outline}

The report is organised as follows: the first section provides background in the mathematics and theoretical concepts, presenting relevant equations and proving theorems which are used in our application throughout. 

Within the first section we introduce the formalism that we use to represent quantum spin systems and discuss the equation underpinning their evolution over time. We also detail the structure of the Hamiltonian which we consider--for both single and multiple spins, interacting and non-interacting---and introduce notation for concisely writing such Hamiltonians. 

Next, we discuss the Liouvillian picture of quantum mechanics, giving us a  form to which we can readily apply various numerical methods. We prove some useful results of the Liouvillian superoperator which will be later used in our implementation.

Finally, we discuss how to measure the expectation values and spin components of spin systems and visualise these in 3D space where possible.

In the second section, we discuss the Python package QuTiP and review current numerical methods for solving initial value problems for ordinary differential equations and both polynomial and iterative techniques for approximating the matrix exponential. 

For each method in this section we discuss their shortcomings regarding highly oscillatory systems, such as what properties they do or do not conserve, and how the error in the approximation is bounded depending on the qualities of the system.

The third section begins with defining the Magnus expansion for time-dependent first-order homogeneous linear differential equations. Truncating the expansion, we reformulate the remaining terms to be applied more effectively to the Hamiltonians with which we are concerned. An example of our implementation in Python is provided with instructions for its use.

The fourth section is primarily concerned with comparing the effectiveness of our Magnus-based approach against current techniques. We analyse the rates of convergence for both our approach and others, and also consider the error-to-cost ratio for different approximations for evaluating the integals involved in the Magnus expansion.

\pagebreak

\newgeometry{top=4.4cm}
\tableofcontents
\restoregeometry

\pagebreak

\pagenumbering{arabic}

\section{Background}
In this section we discuss the mathematical representation and properties of quantum spin systems; we define the Hamiltonian of a system, which---alongside a given initial condition---determines the evolution of the system over time; and we discuss the resulting differential equation describing the dynamics of a quantum spin system and reformulate to the Liouvillian picture to apply existing numerical methods. Furthermore, throughout this section we prove properties of the Liouvilillan superoperator and introduce some new notation.

\subsection{Density Matrices}
We represent quantum states by density matrices, denoted $\rho(t)$. For a $n$ spin system, it is a complex-valued $2^n\times 2^n$ matrix. Density matrices are Hermitian, positive semi-definite, and have unit trace
$$\rho^\dagger = \rho,\;\;\;\rho\geq0,\;\;\;\Tr\{\rho\} = 1.$$
The initial condition of the density matrix is denoted as $\rho_0$. Later in this section we discuss how to use the density matrix to determine properties of the system.

For a system of $n$ spins that are not interacting, the density matrix can be written as 
\begin{equation}
\rho(t) = \rho_1(t) \otimes \cdots \otimes \rho_n(t),
\label{nonIntDensityMatrix}
\end{equation}
where $\rho_k(t)$ is the density matrix for each respective spin and $\otimes$ is the Kronecker product. When there is interaction between the spins, the density matrix cannot be written as such.

\subsection{The Hamiltonian}
The Hamiltonian for an $n$ spin system is an $2^n \times 2^n$ complex-valued Hermitian (self-adjoint) matrix. For such a Hamiltonian,
$$iH\in\mathfrak{su}(2^n),$$
where $\mathfrak{su}(2^n)$ is the Lie algebra of $\mathrm{SU}(2^n)$ and consists of $n\times n$ skew-Hermitian matrices with trace zero, with the operator
$$[A, B] = AB - BA,\;\;\;\;A,\,B\in\mathfrak{su}(2^n),$$
called the commutator \cite{su}. The space $\mathfrak{su}(2^n)$ is closed under commutation of any two of its elements, meaning the commutator as a matrix is also skew-Hermitian
$$[A, B]^\dagger = -[A, B].$$

\subsubsection{Single Spin Hamiltonian}
The space $\mathfrak{su}(2)$, has dimension 3 and a basis formed of the skew-Hermitian matrices
$$i\sigma_x,\;\;i\sigma_y,\;\;i\sigma_z,$$
where
$$\sigma_x = \begin{pmatrix}0 & 1 \\ 1 & 0\end{pmatrix},\;\;\sigma_y = \begin{pmatrix}0 & -i \\ i & 0\end{pmatrix},\;\;\sigma_z = \begin{pmatrix}1 & 0 \\ 0 & -1\end{pmatrix},$$
called the Pauli spin matrices \cite{su}. These matrices satisfy the following commutation relations,
$$ [\sigma_x,\sigma_y] = 2i\sigma_z,\;\;[\sigma_y,\sigma_z] = 2i\sigma_x,\;\;[\sigma_z,\sigma_x] = 2i\sigma_y.$$
Given two real functions and a constant,
$$f:\mathbb{R}\to\mathbb{R},\;\;\;g:\mathbb{R}\to\mathbb{R},\;\;\;\Omega\in\mathbb{R},$$
we define
$$H: \mathbb{R}\to\mathbb{C}^{2\times2}$$
$$H(t) = f(t)\sigma_x + g(t)\sigma_y + \Omega\sigma_z,$$
which is the form of a single spin Hamiltonian used throughout this report. As the Pauli matrices are Hermitian and $\mathfrak{su}(2)$ is a linear subspace, this form of Hamiltonian is still Hermitian. Note that for a time-independent Hamiltonian we simply take $f$ and $g$ to be constant functions.

\subsubsection{Multi-Spin Hamiltonian}
When the system contains more than one spin, it can be either interacting or non-interacting. Suppose the system contains $n$ spins. For each spin, we define a Hamiltonian
$$H_j(t) = f_j(t)\sigma_x + g_j(t)\sigma_y + \Omega_j\sigma_z,$$
for $j \in \{1,\ldots,n\}$. Then, for the non-interacting case, the Hamiltonian is
defined as\begin{align*}
    H_0(t) = \,\,&H_1(t) \otimes I \otimes \cdots \otimes I \\
    &+ I \otimes H_2(t) \otimes \cdots \otimes I \\
    &+ \cdots \\
    &+ I \otimes \cdots \otimes I \otimes H_n(t),
\end{align*}
where $I$ is the $2\times2$ identity matrix. If the spins interact, there is an additional term in the Hamiltonian,
$$H_J\in\mathbb{C}^{2^n \times 2^n},$$ 
which is time-independent and Hermitian. Thus the Hamiltonian for an interacting multi-spin system is
$$H(t) = H_0(t) + H_J.$$
Therefore, in order to specify an $n$ spin Hamiltonian, we require
$$\left(f_j(t),\,g_j(t),\,\Omega_j\right),\;\;\;j\in\{1,\ldots,n\},$$
and the constant matrix $H_J$ if the system is interacting.

\subsubsection{Highly Oscillatory Chirped Pulse Systems}
A system with which we are primarily concerned is one where each spin displays highly oscillatory chirped pulses (HOCP) in its evolution. We define such a system for a single spin  as follows: let
\begin{align*}
    e(t) &= \beta \exp\bigg(\frac{-(t-10)^8}{10^7}\bigg), \\
    w(t) &= \exp\bigg(i\,\gamma\,(t-10)^2\bigg).
\end{align*}
Then,
\begin{align*}
    f(t) &= e(t)\,\operatorname{Re}\big\{w(t)\big\}, \\
    g(t) &= e(t)\,\operatorname{Im}\big\{w(t)\big\},
\end{align*}
where $\operatorname{Re}$ and $\operatorname{Im}$ denote the real and imaginary parts, respectively. Throughout this report we define such a system by specifying the values of $\beta$, $\gamma$, and $\Omega$ for each spin's Hamiltonian, an initial condition $\rho_0$, and a constant matrix $H_J$ if the spins are interacting. 

\begin{figure}[H]
    \centering
    \includegraphics[width=\textwidth]{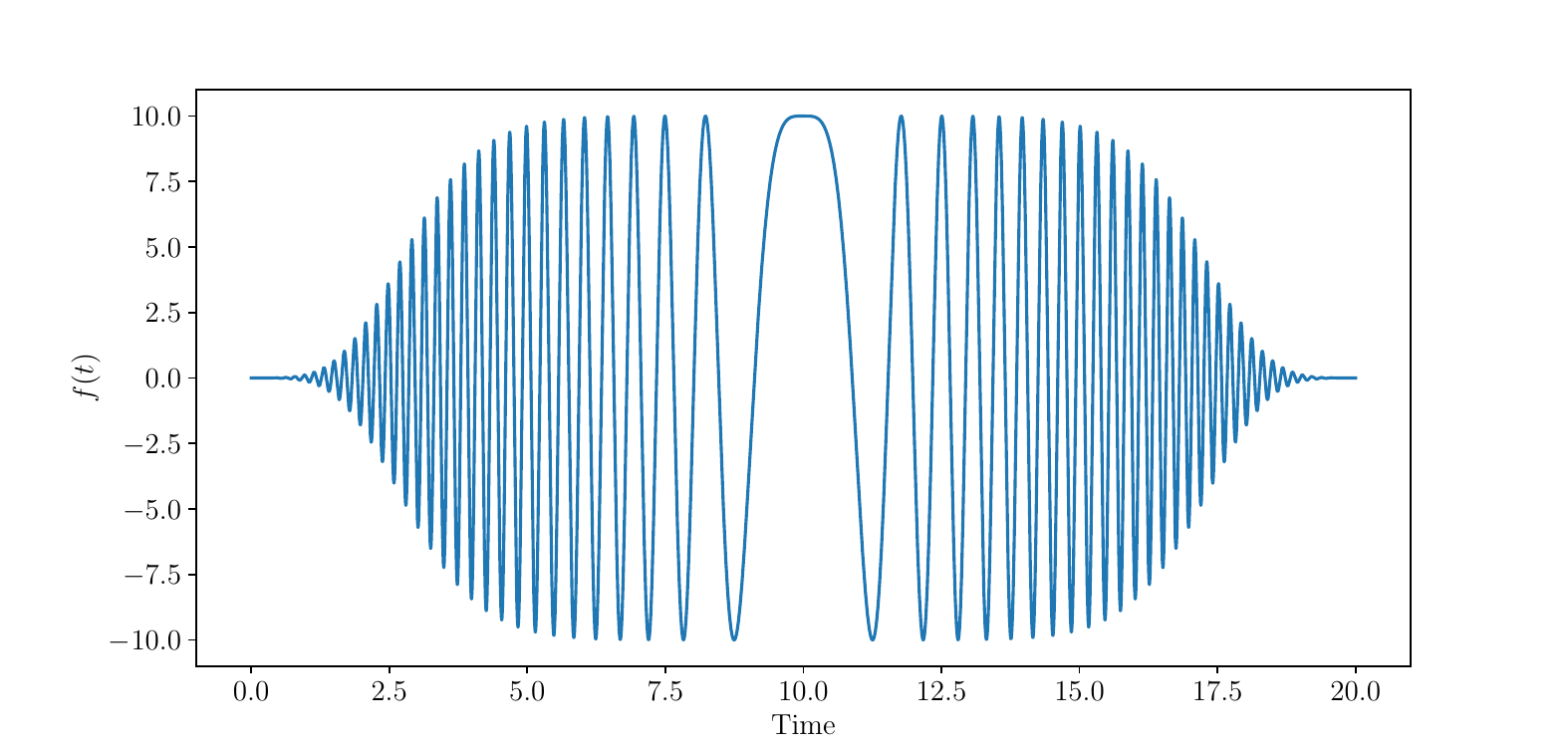}
    \caption{The function $f$ with $\beta=10$ and $\gamma=2$.}
    \label{hocpExample}
\end{figure}

\subsubsection{Notation}
Since the notation for a Hamiltonian is laborious to write, we introduce the following operators $\mathbf{I}$ and $\mathbf{S}$.

\begin{definition}
\label{I}
    $$\mathbf{I}_j\{A\} := I \otimes \cdots \otimes I \otimes A \otimes I \otimes \cdots \otimes I,$$
    where $A$ is in the $j$th position. The number of terms in the product is inferred from the number of spins in the system.
\end{definition}
\noindent
This operator is linear due to the linearity of the Kronecker product. Using this, we write the Hamiltonian as 
$$H(t) = \sum_{j=0}^n \mathbf{I}_j\left\{H_j(t)\right\} + H_J$$
When $A$ is a Pauli matrix, we simplify further,
$$\mathbf{I}_j^a := \mathbf{I}_j\{\sigma_a\},\;\;\;a\in\{x,y,z\}.$$
Noting that $H_j(t) = f_j(t)\,\sigma_x + g_j(t)\,\sigma_y + \Omega_j\,\sigma_z,$ we apply the linearity of the Kronecker product to get $$\mathbf{I}_j\left\{H_j(t)\right\} = f_j(t)\,\mathbf{I}_j^x + g_j(t)\,\mathbf{I}_j^y + \Omega_j\,\mathbf{I}_j^z.$$
Therefore the full interacting $n$ spin Hamiltonian can be written as 
\begin{equation}
    H(t) = \sum_{j=1}^n \left(f_j(t)\,\mathbf{I}_j^x + g_j(t)\,\mathbf{I}_j^y + \Omega_j\,\mathbf{I}_j^z\right) + H_J.
    \label{eqn:Hamiltonian}
\end{equation}
This leads us to the following,
\begin{lemma}
\label{Integral of Hamiltonian}
    \begin{equation*}
        \int_a^b H(t)\,dt = \sum_{j=1}^n \left(\int_a^b f_j(t)\,dt\,\mathbf{I}_j^x + \int_a^b g_j(t)\,dt\,\mathbf{I}_j^y + \Omega_j(b-a)\,\mathbf{I}_j^z\right) + H_J(b-a)
    \end{equation*}
\end{lemma}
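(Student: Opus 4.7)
The plan is to take the closed form for $H(t)$ given in equation~\eqref{eqn:Hamiltonian} and integrate termwise, exploiting the linearity of the (matrix-valued) Riemann integral together with the fact that the symbols $\mathbf{I}_j^x, \mathbf{I}_j^y, \mathbf{I}_j^z$ and $H_J$ denote \emph{constant} matrices with respect to $t$.

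More concretely, first I would substitute the expression from~\eqref{eqn:Hamiltonian} directly inside $\int_a^b H(t)\,dt$. Next, since the integral of a matrix-valued function is defined entrywise, and each entry of a finite sum is the sum of the entries, I would swap the sum over $j$ with the integral sign. Within each summand the matrices $\mathbf{I}_j^x, \mathbf{I}_j^y, \mathbf{I}_j^z$ are independent of $t$, so scalar-times-constant-matrix products integrate as $\int_a^b f_j(t)\,dt\,\mathbf{I}_j^x$ and similarly for the $g_j$ term; the $\Omega_j\,\mathbf{I}_j^z$ term is constant in $t$ and so integrates to $\Omega_j(b-a)\,\mathbf{I}_j^z$. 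Finally, the trailing $H_J$ is a constant matrix, giving $H_J(b-a)$. Reassembling yields exactly the claimed formula.

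The only thing that requires any justification is that the integral commutes with the finite sum and with multiplication by a constant matrix. This follows immediately from the entrywise definition of the matrix integral together with the linearity of the scalar Riemann integral, and it is also what was implicitly used when writing~\eqref{eqn:Hamiltonian} as a sum in the first place (via the linearity of the Kronecker product noted just after Definition~\ref{I}). The assumption that $f_j$ and $g_j$ are real-valued and smooth enough to be integrable on $[a,b]$ is inherited from the setup of the HOCP model.

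I do not anticipate a real obstacle; the lemma is essentially a bookkeeping statement that isolates the $t$-dependence into the scalar integrals $\int_a^b f_j\,dt$ and $\int_a^b g_j\,dt$, so that later applications (for example inside the first-order Magnus term) only need to approximate scalar integrals rather than a matrix-valued one. The mild care needed is just to keep the ordering of scalar and matrix factors clear in the notation, since $\mathbf{I}_j^a$ sits to the right of its scalar coefficient throughout.
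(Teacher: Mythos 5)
Your proposal is correct and matches the paper's proof, which simply invokes the definition of $H(t)$ from equation~(\ref{eqn:Hamiltonian}) together with linearity of integration; you have merely spelled out in detail the same termwise integration that the paper leaves implicit. No gap or divergence to report.
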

\begin{proof}
Follows directly from definition of $H(t)$ and linearity of integration.
\end{proof}

\begin{lemma}[Generalised mixed-product]
\label{Generalised mixed product}
    Given matrices $A_j,B_j\in\mathbb{C}^{n\times n},\;\\j\in\{1,\ldots,k\},$
    $$(A_1 \otimes A_2 \otimes \cdots \otimes A_k)(B_1 \otimes B_2 \otimes \cdots \otimes B_k) = A_1 B_1 \otimes A_2 B_2 \otimes \cdots \otimes A_k B_k.$$
\end{lemma}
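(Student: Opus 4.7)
The plan is to prove this by induction on $k$, taking as the base case the standard (two-factor) mixed-product property $(A \otimes B)(C \otimes D) = AC \otimes BD$, which I would cite as a known identity of the Kronecker product rather than re-derive. The base case $k=2$ is then immediate.

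For the inductive step, I would assume the identity holds for any $k-1$ factors and rewrite the $k$-fold Kronecker products using associativity of $\otimes$, grouping off the first factor:
$$(A_1 \otimes A_2 \otimes \cdots \otimes A_k)(B_1 \otimes B_2 \otimes \cdots \otimes B_k) = \bigl(A_1 \otimes (A_2 \otimes \cdots \otimes A_k)\bigr)\bigl(B_1 \otimes (B_2 \otimes \cdots \otimes B_k)\bigr).$$
Applying the two-factor mixed-product property to the right-hand side yields
$$A_1 B_1 \otimes (A_2 \otimes \cdots \otimes A_k)(B_2 \otimes \cdots \otimes B_k),$$
and the inductive hypothesis applied to the second tensor factor collapses this into $A_1 B_1 \otimes A_2 B_2 \otimes \cdots \otimes A_k B_k$, closing the induction.

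The only real subtlety is a dimensional/well-definedness check: for the products $A_j B_j$ and the outer matrix product of the two Kronecker factors to make sense, all the $A_j$ and $B_j$ must be square of the same size $n$, which is exactly the hypothesis of the lemma, so the two-factor base case applies at each stage of the induction. I do not expect any genuine obstacle here; the argument is essentially bookkeeping on top of the standard $k=2$ identity and the associativity of $\otimes$.
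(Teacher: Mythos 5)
Your proof is correct and takes essentially the same approach as the paper, whose entire proof is ``Induct on $k$'' with a citation to Broxson; your induction on $k$, grouping off the first tensor factor via associativity of $\otimes$ and invoking the two-factor mixed-product identity, is precisely the standard way to fill in that induction. The dimensional check you note is handled by the lemma's hypothesis that all matrices are square of size $n$, as you observe, so there is no gap.
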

\begin{proof}
Induct on $k$---c.f. Broxson \cite{BroxsonKron}.
\end{proof}

\begin{definition}
\label{S}
    Given matrices $A, B\in\mathbb{C}^{n\times n}$, let
    $$\mathbf{S}_{i,\,j}\left\{A,B\right\} := I \otimes \cdots \otimes I \otimes A \otimes I \otimes \cdots \otimes I \otimes B \otimes I \otimes \cdots \otimes I,$$
    where $A$ is in the $i$th position and $B$ is in the $j$th position. When $i=j$ we write
    \begin{align*}
        \mathbf{S}_{j,\,j}\left\{A,B\right\} &= I \otimes \cdots \otimes I \otimes AB \otimes I \otimes \cdots \otimes I \\
        &= \mathbf{I}_{j}\left\{AB\right\}.
    \end{align*}
\end{definition}

\begin{lemma}
\label{S properties}
    Given matrices $A, B\in\mathbb{C}^{n\times n}$,
    \begin{enumerate}[1.]
        \item $\mathbf{S}_{i,\,j}\left\{A,B\right\} = \mathbf{I}_i\left\{A\right\}\,\mathbf{I}_j\left\{B\right\},$
        \item When $i\neq j$,\, $\mathbf{S}_{i,\,j}\left\{A,B\right\} = \mathbf{S}_{j,\,i}\left\{B,A\right\},$
        \item $\mathbf{S}_{j,\,j}\{A,B\} = \mathbf{S}_{j,\,j}\{B,A\} \Longleftrightarrow AB=BA.$
    \end{enumerate}
\end{lemma}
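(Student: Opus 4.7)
The plan is to dispatch all three parts via the generalised mixed-product identity (Lemma \ref{Generalised mixed product}) together with direct inspection of the Kronecker-product expressions.

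For part 1, I would write $\mathbf{I}_i\{A\}$ and $\mathbf{I}_j\{B\}$ as $n$-fold Kronecker products with all identity factors except for $A$ in slot $i$ and $B$ in slot $j$ respectively. Applying Lemma \ref{Generalised mixed product} slot-by-slot, most factors collapse as $I\cdot I = I$. I would then split into cases: if $i\neq j$, the only non-identity factors left are $A\cdot I = A$ in slot $i$ and $I\cdot B = B$ in slot $j$, producing $\mathbf{S}_{i,j}\{A,B\}$; if $i=j$, slot $j$ yields $A\cdot B$ and all other slots yield $I$, giving $\mathbf{I}_j\{AB\}$, which matches the definition $\mathbf{S}_{j,j}\{A,B\}=\mathbf{I}_j\{AB\}$.

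Part 2 is essentially by inspection: when $i\neq j$, both $\mathbf{S}_{i,j}\{A,B\}$ and $\mathbf{S}_{j,i}\{B,A\}$ denote the same Kronecker product, namely the one with $A$ in position $i$, $B$ in position $j$, and $I$ elsewhere. Nothing more than matching Definition \ref{S} on each side is needed.

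For part 3, I would combine part 1 (or rather Definition \ref{S}) to rewrite $\mathbf{S}_{j,j}\{A,B\}=\mathbf{I}_j\{AB\}$ and $\mathbf{S}_{j,j}\{B,A\}=\mathbf{I}_j\{BA\}$, reducing the claim to showing that $\mathbf{I}_j\{AB\}=\mathbf{I}_j\{BA\}$ holds iff $AB=BA$. The ``$\Leftarrow$'' direction is immediate. For ``$\Rightarrow$'', I need injectivity of $\mathbf{I}_j$ on $\mathbb{C}^{n\times n}$; the cleanest argument is to observe that $\mathbf{I}_j\{M\}$ has a block structure in which one of the diagonal blocks is exactly $M$ (read off by fixing a standard basis vector in each identity factor). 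Equality of the two sides therefore forces $AB=BA$. This last injectivity observation is the only non-cosmetic step, and hence the main (mild) obstacle; everything else is straightforward bookkeeping with the mixed-product rule.
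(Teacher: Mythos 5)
Your proposal is correct and takes essentially the same route as the paper: part 1 via the generalised mixed-product lemma (Lemma \ref{Generalised mixed product}) with the identity factors collapsing slot-by-slot, and parts 2 and 3 by direct inspection of Definition \ref{S}. The only substantive difference is that you make explicit the injectivity of $\mathbf{I}_j$ (reading $M$ off from $\mathbf{I}_j\{M\}$ by fixing basis vectors in the identity slots) needed for the forward implication in part 3---a step the paper's proof silently subsumes under ``follows directly from the definition''---which is a worthwhile, correct addition rather than a deviation.
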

\begin{proof}
    \begin{enumerate}[1.]
        \item Apply the generalised Kronecker product to $\mathbf{I}_i\left\{A\right\}\,\mathbf{I}_j\left\{B\right\},$
        \item This follows directly from the definition.
        \item This also follows directly from the definition.
    \end{enumerate}
\end{proof}
\noindent
To conclude, the following conventions will be maintained throughout:
\begin{itemize}
    \item $H_j(t),\;j\in\mathbb{N}$ refers to a function of the form
    $$f_j(t)\,\sigma_x + g_j(t)\,\sigma_y + \Omega_j\,\sigma_z,$$
    \item $H_0(t)$ is
    $$\sum_{j=1}^n \mathbf{I}_j\left\{H_j(t)\right\},$$
    \item and $H(t)$ is
    $$H_0(t) + H_J.$$
\end{itemize}

\subsection{The Liouville-von Neumann Equation}
Given a Hamiltonian, $H(t)$, the corresponding dynamics of the spin system's density matrix, $\rho(t)$, are determined by the following ordinary differential equation,
\begin{equation*}
    \frac{\partial\rho(t)}{\partial t} = -i\left[H(t),\,\rho(t)\right],
\end{equation*}
called the Liouville-von Neumann equation \cite{Mazzi}. Along with an initial condition, $\rho_0$, we get a matrix-valued initial value problem. Using our reduced notation, the initial density matrix for an $n$ particle non-interacting system is written as 
$$\rho_0 = \sum_{j=1}^n \mathbf{I}_j\left\{\rho_j\right\}.$$

\subsubsection{The Liouvillian Picture}
In order to more readily apply existing numerical methods to the Liouville-von Neumann equation, we change from a Hamiltonian framework to a Liouvillian framework. This is achieved through use of the Liouvillian superoperator,
$$L:\mathbb{C}^{n\times n}\to\mathbb{C}^{n^2\times n^2}$$
$$L\{X\} = I_n \otimes X - X^T \otimes I_n,$$
and column-major vectorisation
$$\text{vec}:\mathbb{C}^{n\times n} \to \mathbb{C}^{n\times1},$$
denoted $\text{vec}(X)$. From \cite{MacedoVec} we have
$$\text{vec}(XY) = (I \otimes X)\text{vec}(Y) = (Y^T \otimes I)\text{vec}(X),$$
and so, denoting $\text{vec}(\rho(t))$ by $\mathbf{r}(t)$, the Liouvillian-von Neumann equation is transformed into
\begin{equation}
    \frac{\partial\mathbf{r}(t)}{\partial t} = -iL\{H(t)\}\,\mathbf{r}(t),
    \label{veclvn}
\end{equation}
which is in the form of a standard vector-valued initial value problem:
$$\mathbf{r}'(t) = A(t)\,\mathbf{r}(t),\;\;\mathbf{r}(t_0) = \mathbf{r}_0,$$
where $\mathbf{r}_0 = \text{vec}(\rho_0).$ It is in this form to which we apply our numerical methods in later sections.

\subsubsection{Properties of the Liouvillian}

The main result here is Theorem \ref{Double Integral of Commutator of Liouvillian} and is vital in the later implementation of the Magnus expansion. Theorem \ref{Conservation of Hermitivity} is also useful as it shows that the Liouvillian preserves a useful property of the Hamiltonian.

\begin{theorem}[Convservation of Hermitivity]
\label{Conservation of Hermitivity}
    For a Hermitian matrix $H$, $L\left\{H\right\}$ is Hermitian.
\end{theorem}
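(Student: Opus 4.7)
The plan is to compute $(L\{H\})^\dagger$ directly from the definition and show it equals $L\{H\}$, using the standard compatibility of the conjugate transpose with the Kronecker product, namely $(A \otimes B)^\dagger = A^\dagger \otimes B^\dagger$, together with Hermitivity of $H$.

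First I would apply the conjugate transpose to the definition
$$L\{H\} = I_n \otimes H - H^T \otimes I_n,$$
distributing over the sum and over each Kronecker product to obtain
$$\bigl(L\{H\}\bigr)^\dagger = I_n^\dagger \otimes H^\dagger - (H^T)^\dagger \otimes I_n^\dagger.$$
Since $I_n^\dagger = I_n$ and $H^\dagger = H$ by hypothesis, the first term collapses to $I_n \otimes H$. For the second term, $(H^T)^\dagger$ is precisely the complex conjugate $\overline{H}$, and because $H$ is Hermitian, $H^\dagger = H$ rearranges to $\overline{H} = H^T$. Substituting back yields
$$\bigl(L\{H\}\bigr)^\dagger = I_n \otimes H - H^T \otimes I_n = L\{H\},$$
which is exactly the claim.

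There is no serious obstacle here; the only subtlety is keeping the transpose and conjugate bookkeeping straight, in particular recognising that $(H^T)^\dagger$ is the conjugate (not $H$ itself) before invoking Hermitivity to identify $\overline{H}$ with $H^T$. Everything else is routine manipulation of the Kronecker product.
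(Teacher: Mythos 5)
Your proposal is correct and follows essentially the same route as the paper: apply the conjugate transpose to the definition, distribute it over the Kronecker products, and invoke Hermitivity of $H$. You are more explicit than the paper on the one subtle step --- identifying $(H^T)^\dagger = \overline{H}$ with $H^T$ via $H^\dagger = H$ --- which the paper compresses into a single sentence, but the argument is the same.
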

\begin{proof}
    For a Hermitian matrix $H$, consider $$L\left\{H\right\} = I \otimes H - H^T \otimes I.$$
    Taking the conjugate transpose of both sides gives
    $$L\left\{H\right\}^\dagger = \left(I \otimes H\right)^\dagger - (H^T \otimes I)^\dagger.$$
    Since conjugate transposition is distributive over the Kronecker product and $H$ is Hermitian, we then get
    $$L\left\{H\right\}^\dagger = I \otimes H - H^T \otimes I.$$
    Thus $L\left\{H\right\}$ is Hermitian.
\end{proof}

\begin{lemma}
\label{Integral of liouvillian}
    For a matrix $X\in\mathbb{C}^{n\times n},$
    $$\int L\left\{X\right\} = L\left\{\int X\right\}.$$
\end{lemma}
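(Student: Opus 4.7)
The plan is to unfold the definition of $L$ and then push the integral through each Kronecker product using linearity. Specifically, starting from
$$\int L\{X\} = \int \bigl(I \otimes X - X^T \otimes I\bigr),$$
I would split the integral across the subtraction by linearity of integration, leaving two terms of the form $\int(I \otimes X)$ and $\int(X^T \otimes I)$.

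Next I would argue that each of these integrals commutes with the Kronecker product because $I$ is constant with respect to the integration variable. The Kronecker product is bilinear, so $I \otimes X$ is entrywise a linear function of the entries of $X$ (the entries of $I \otimes X$ are products $I_{ij} X_{kl}$ with $I_{ij}$ constant), and similarly for $X^T \otimes I$. Thus $\int(I \otimes X) = I \otimes \int X$ and $\int(X^T \otimes I) = \bigl(\int X\bigr)^T \otimes I$, where the second equality also uses that transposition commutes with integration (again an entrywise linearity argument).

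Recombining gives
$$\int L\{X\} = I \otimes \int X - \Bigl(\int X\Bigr)^T \otimes I = L\Bigl\{\int X\Bigr\},$$
which is the desired identity. I do not expect any genuine obstacle here; the only thing requiring any care is the clean statement that constant-factor Kronecker products and transposition pass through an integral, which is purely a consequence of entrywise linearity of integration.
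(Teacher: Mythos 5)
Your proposal is correct and follows exactly the paper's route: the paper's proof is a one-line appeal to linearity of integration and of the Kronecker product, which your argument simply spells out entrywise (including the minor point that transposition also commutes with integration). No gaps, and nothing genuinely different in approach.
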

\begin{proof}
    This follows from the linearity of integration and of the Kronecker product.
\end{proof}

\begin{lemma}
\label{Commutator of Liouvillian}
    For $A(t) = -iL\left\{H(t)\right\},$
    $$\big[A(s),\,A(r)\big] = L\left\{\big[H(r),\,H(s)\big]\right\}.$$
\end{lemma}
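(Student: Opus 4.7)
The plan is a direct computation: expand $A(s)A(r)$ and $A(r)A(s)$ using the definition $L\{X\} = I \otimes X - X^T \otimes I$, and simplify using the mixed-product property (Lemma \ref{Generalised mixed product} in the two-factor case).

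First I would write
\[
A(s)A(r) = -\bigl(I \otimes H(s) - H(s)^T \otimes I\bigr)\bigl(I \otimes H(r) - H(r)^T \otimes I\bigr),
\]
and expand the product into four Kronecker products. The mixed-product rule collapses each of these into a single Kronecker product, giving
\[
A(s)A(r) = -\bigl(I \otimes H(s)H(r) - H(r)^T \otimes H(s) - H(s)^T \otimes H(r) + H(s)^T H(r)^T \otimes I\bigr).
\]
Writing the analogous expression for $A(r)A(s)$, the two middle cross terms are identical and cancel when we subtract.

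Next I would collect the surviving terms. The $I \otimes (\cdot)$ contribution is $-I \otimes (H(s)H(r) - H(r)H(s)) = -I \otimes [H(s), H(r)]$, and the $(\cdot) \otimes I$ contribution is $-(H(s)^T H(r)^T - H(r)^T H(s)^T) \otimes I$. The key observation, which is the one point that needs care, is that $H(s)^T H(r)^T = (H(r) H(s))^T$, so the second piece simplifies to $[H(s), H(r)]^T \otimes I$. Combining,
\[
[A(s), A(r)] = -I \otimes [H(s), H(r)] + [H(s), H(r)]^T \otimes I = -L\{[H(s), H(r)]\} = L\{[H(r), H(s)]\}.
\]

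The only real obstacle is bookkeeping: tracking the sign introduced by the $(-i)^2$, the minus sign inside $L$, and the transpose-flip of order in the commutator, so that the final answer has $[H(r), H(s)]$ (in the stated order) rather than $[H(s), H(r)]$. Everything else is a mechanical application of the mixed-product identity and linearity.
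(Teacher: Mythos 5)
Your proof is correct and follows essentially the same route as the paper's: expand the product $L\{H(s)\}L\{H(r)\}$ via the mixed-product property, observe that the cross terms $H(r)^T \otimes H(s)$ and $H(s)^T \otimes H(r)$ cancel in the commutator, and use $H(s)^T H(r)^T = (H(r)H(s))^T$ to reassemble the $(\cdot)\otimes I$ piece into a transposed commutator. Your sign bookkeeping (the $(-i)^2$ factor and the order flip from $[H(s),H(r)]$ to $[H(r),H(s)]$) checks out and matches the paper's computation.
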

\begin{proof}
    We first note that by the mixed-product property of the Kronecker product,
    \begin{align*}
        L\left\{H(s)\right\}\,L\left\{H(r)\right\} &= \left(I \otimes H(s) - H(s)^T \otimes I\right)\left(I \otimes H(r) - H(r)^T \otimes I\right) \\
        &= I \otimes H(s)H(r) - H(s)^T \otimes H(r) \\
        \MoveEqLeft[-5] - H(r)^T \otimes H(s) + H(s)^T H(r)^T \otimes I.
    \end{align*}
    This implies that
    \begin{align*}
        \big[-iL\left\{H(s)\right\},\,-iL\left\{H(r)\right\}\big] &= L\left\{H(r)\right\}L\left\{H(s)\right\} - L\left\{H(s)\right\}L\left\{H(r)\right\} \\
        &= I \otimes H(r)H(s) - I \otimes H(s)H(r) \\ 
        \MoveEqLeft[-2] + H(r)^T H(s)^T \otimes I - H(s)^T H(r)^T \otimes I.
    \end{align*}
    By linearity of the Kronecker product and properties of transposition,
    \begin{align*}
        \big[-iL\left\{H(s)\right\},\,-iL\left\{H(r)\right\}\big] &= I \otimes \big(H(r)H(s) - H(s)H(r)\big) \\
        \MoveEqLeft[-3] - \big(H(r)H(s) - H(s)H(r)\big)^T \otimes I \\
        &= L\left\{\big[H(r),\,H(s)\big]\right\}.
    \end{align*}
\end{proof}

\begin{theorem}
\label{Double Integral of Commutator of Liouvillian}
    $$\int_0^t \int_0^s \big[-iL\left\{H(s)\right\},\,-iL\left\{H(r)\right\}\big]\,dr\,ds = L\left\{\int_0^t \int_0^s \big[H(r),\,H(s)\big]\,dr\,ds\right\}.$$
\end{theorem}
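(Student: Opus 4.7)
The plan is to combine the two immediately preceding results, namely Lemma \ref{Commutator of Liouvillian} (which rewrites the commutator of Liouvillians as a Liouvillian of a commutator) and Lemma \ref{Integral of liouvillian} (which says the Liouvillian commutes with integration), applied twice for the double integral.

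First I would take the integrand on the left-hand side and apply Lemma \ref{Commutator of Liouvillian} directly, which replaces $\big[-iL\{H(s)\},\,-iL\{H(r)\}\big]$ by $L\{[H(r),H(s)]\}$. This turns the left-hand side into
$$\int_0^t \int_0^s L\big\{[H(r),H(s)]\big\}\,dr\,ds.$$
Next I would pull the Liouvillian out of the inner integral using Lemma \ref{Integral of liouvillian} (the matrix-valued inner integrand depends on $r$ with $s$ as a parameter, so the lemma applies verbatim), obtaining
$$\int_0^t L\left\{\int_0^s [H(r),H(s)]\,dr\right\}\,ds,$$
and then apply the same lemma a second time to pull $L$ outside the outer integral in $s$. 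This yields exactly the right-hand side.

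There is no real obstacle: everything reduces to two applications of a linearity result and one application of the identification of $[L\{\cdot\},L\{\cdot\}]$ with $L\{[\cdot,\cdot]\}$. The only small point worth flagging is that Lemma \ref{Integral of liouvillian} is stated for an indefinite integral of a single matrix-valued function, so one should note that it extends immediately to definite integrals and to integrands depending on a parameter, since $L$ acts linearly and entrywise on its argument and integration on a fixed interval is likewise linear and entrywise.
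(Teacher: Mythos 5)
Your proposal is correct and matches the paper's own proof, which consists exactly of applying Lemma \ref{Commutator of Liouvillian} to the integrand and then Lemma \ref{Integral of liouvillian} twice to move $L$ outside both integrals. Your added remark that Lemma \ref{Integral of liouvillian} extends to definite integrals with a parameter is a reasonable clarification the paper leaves implicit, but it does not change the argument.
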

\begin{proof}
    Apply Lemma \ref{Commutator of Liouvillian} and then apply Lemma \ref{Integral of liouvillian} twice.
\end{proof}

\subsection{Spin Components and Expectation Values}

An important aspect of a spin system is being able to discern properties and analyse how these evolve over time. From the density matrix we can determine the expectation value for a given operator. Operators are represented by $n\times n$ Hermitian matrices for system of $n$ spins. We denote the expectation value for an operator $A$ by $\langle A \rangle.$

\begin{definition}[Frobenius Inner Product]
    For $A,B\in\mathbb{C}^{n\times n}$, 
    $$\langle A,B\rangle_F := \Tr\{A^\dagger B\}.$$
\end{definition}

\begin{theorem}
For any two Hermitian matrices $A, B\in\mathbb{C}^{n\times n},$ their Frobenius inner product is real.
\end{theorem}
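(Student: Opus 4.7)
The plan is to show that $\langle A,B\rangle_F$ equals its own complex conjugate, which forces it to be real. Since $A$ is Hermitian, $A^\dagger = A$, so the inner product simplifies to $\Tr\{AB\}$, and I just need to verify $\overline{\Tr\{AB\}} = \Tr\{AB\}$.

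First I would use the fact that complex conjugation commutes with the trace (the trace is a sum of diagonal entries, and conjugation distributes over sums). Combined with the identity $\Tr\{M\} = \Tr\{M^T\}$, this gives $\overline{\Tr\{M\}} = \Tr\{M^\dagger\}$ for any square matrix $M$. Applying this with $M = AB$ yields
\[
\overline{\Tr\{AB\}} = \Tr\{(AB)^\dagger\} = \Tr\{B^\dagger A^\dagger\}.
\]

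Next I would invoke Hermiticity of $A$ and $B$ to replace $A^\dagger$ and $B^\dagger$ by $A$ and $B$, and then apply the cyclic property of the trace to swap $BA$ into $AB$:
\[
\Tr\{B^\dagger A^\dagger\} = \Tr\{BA\} = \Tr\{AB\}.
\]
Chaining the equalities shows $\overline{\langle A,B\rangle_F} = \langle A,B\rangle_F$, which is exactly the condition for a complex number to be real.

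There is no real obstacle here; the only thing to be slightly careful about is the order in which Hermiticity and the cyclic property are applied, since neither $A$ nor $B$ individually need commute. Using the conjugate-transpose identity before cyclicity keeps the argument clean and avoids needing $AB = BA$.
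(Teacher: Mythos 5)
Your proof is correct and follows essentially the same route as the paper's: both establish $\overline{\Tr\{A^\dagger B\}} = \Tr\{A^\dagger B\}$ using Hermiticity together with standard trace identities ($\overline{\Tr\{M\}} = \Tr\{M^\dagger\}$ and cyclicity). You simply spell out the intermediate steps that the paper compresses into ``standard properties of conjugation and transposition,'' so no substantive difference exists.
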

\begin{proof}
For a Hermitian matrix $A\in\mathbb{C}^{n\times n},$ we have $A = A^\dagger$ and can write $A^T = \overline{A}.$ Applying these and standard properties of conjugation and transposition, we get
\begin{align*}
    \Tr\{A^\dagger B\} &= \Tr\{AB\} \\ 
    &= \overline{\Tr\{AB\}} \\
    &= \overline{\Tr\{A^\dagger B\}},
\end{align*}
meaning
$$\Tr\{A^\dagger B\}\in\mathbb{R}.$$
\end{proof}
\noindent
All single spin density matrices can be expressed as a linear combination of the three Pauli matrices. When the operator is a Pauli matrix, the resulting expectation value is the component of spin in the respective direction. For example, consider the following HOCP system for time $t\in[0,5]$,

\begin{equation}
\begin{cases}
    \beta_1 = 10,\;\;\gamma_1 = 2,\;\;\Omega_1 = 1,\\
    \rho_0 = \sigma_x.
\end{cases}
\label{eqn:hocp1}
\end{equation}

\begin{figure}[H]
    \centering
    \includegraphics[width=0.9\textwidth]{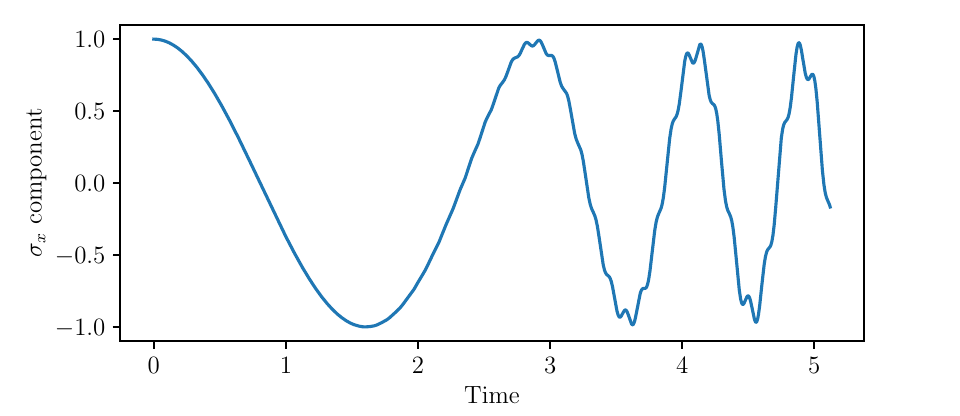}
    \caption{Normalised component of (\ref{eqn:hocp1}) in $\sigma_x$ direction over time.}
\end{figure}
\noindent
We calculate the $x$-component of spin---equivalently the expectation of the operator $\sigma_x$---using the Frobenius inner product, 
$$\langle \sigma_x \rangle(t) = \langle{\rho(t),\sigma_x\rangle}_F.$$
The normalised $x$-component is then
$$d_x(t) = \frac{1}{2}\langle{\rho(t),\sigma_x\rangle}_F.$$
And $d_y$ and $d_z$ are defined similarly. These components are such that
$$d_x(t)^2 + d_y(t)^2 + d_z(t)^2 = 1.$$
In Figure 1 at time $t=0$, we see that the $x$-component is $1$, meaning that the spin is wholly in the $x$-direction. Thus $d_y$ and $d_z$ must be zero.

For a $n$-spin system that is not interacting, with a density matrix of the form (\ref{nonIntDensityMatrix}), we can measure the $\sigma_x$ component of the $k$th spin using the operator $\mathbf{I}_k^x$. The normalisation constant is $2^n$ for $n$ particles. For example, the normalised $\sigma_x$ component of the first of two non-interacting particles is 
$$ \frac{1}{4}\langle{\rho(t),\sigma_x \otimes I\rangle}_F.$$

\subsubsection{The Bloch Sphere}

It is often difficult to visualise the spin of a quantum system, especially when there is more than one spin in the system. However, for a single spin we can use the fact that an analogy can be drawn between the normalised components and 3D space.

In particular, we can treat the three Pauli matrices as the canonical basis of $\mathbb{R}^3$ and the three components as a vector projected from the origin onto the surface of the unit sphere. This is called the Bloch sphere.

\begin{figure}[H]
    \centering
    \includegraphics[width=0.5\textwidth]{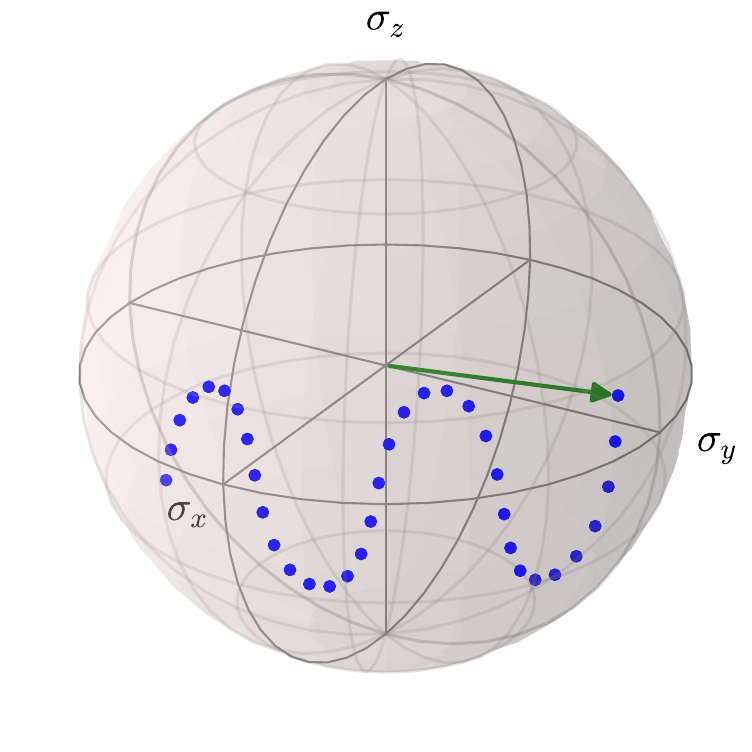}
    \caption{Evolution of (\ref{eqn:hocp1}) over a small interval visualised on the Bloch sphere.}
\end{figure}

\pagebreak

\section{Current Methods}

Here we consider different approaches to solving ordinary differential equations. We first look at some important standard methods applied to the relevant ODE and then move onto more advanced approaches using approximations of the matrix exponential, discussing their rates of convergence, error bounds, and computational efficiency.

For the initial value problem
\begin{equation}
    \mathbf{x}'(t) = A(t)\mathbf{x}(t),\;\;\;\mathbf{x}(t_0) = \mathbf{x}_0,\;\;\;t\in[t_0,\,t_f],
    \label{eqn:ode}
\end{equation}
we discretise the time interval with a time-step $h=0.5^k$, where $k\in\mathbb{N}$ is specified, and approximate the solution at each such point in time, beginning with $\mathbf{x}_0$ at $t_0$. From now on we give the step-size by specifying the value of $k$. The general solution this approach gives is
$$\mathbf{x}(t_{n+1}) = P(t_n)\,\mathbf{x}(t_n),$$
for some matrix-valued function $P(t)$. This inductively gives an explicit form for $\mathbf{x}(t_n),$
$$\mathbf{x}(t_n) = P(t_n)^n\,\mathbf{x}_0.$$
Note how equation (\ref{eqn:ode}) is analogous to the vectorised Liouville-von Neumann equation (\ref{veclvn}).

\subsection{QuTiP}
A current implementation for solving the Liouville-von Neumann equation is the open-source Python package QuTiP, which provides methods and classes for representing and manipulating quantum objects \cite{QuTiP}. The function with which we are concerned is the \texttt{mesolve} function, which evolves a density matrix using a given Hamiltonian in the Liouville-von Neumann equation, evaulating the density matrix at the given times.

As opposed to the other methods we consider, where the given times are used to approximate the density matrix, QuTiP sub-samples the density matrix at the given times from a much more accurate solution. This enables \texttt{mesolve} to maintain an accuracy of $10^{-4}$ regardless of how large a time-step is taken.

\begin{figure}[H]
    \centering
    \includegraphics[width=0.8\textwidth]{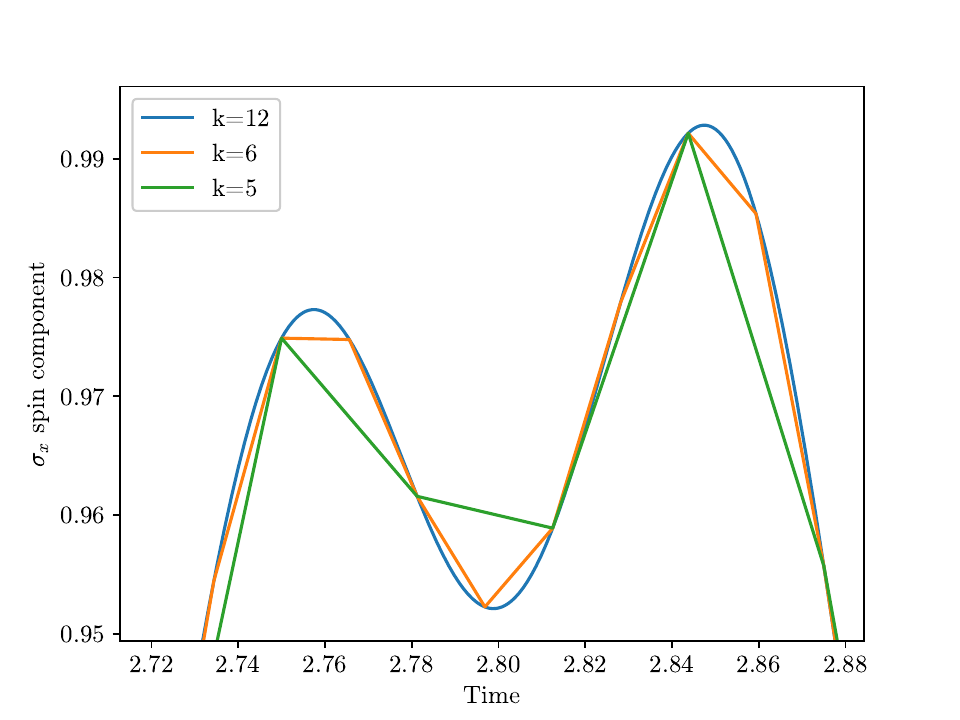}
    \caption{QuTiP sub-sampling for three solutions to (\ref{eqn:hocp1}).}
\end{figure}

\subsection{Standard ODE Methods}
%TODO brief intro to subsection

\subsubsection{Euler's Method}
We begin with the most basic implicit methods for solving ordinary differential equations. There are two forms of Euler's method: explicit and implicit. 

\begin{itemize}
    \item Explicit Euler:
    $$\mathbf{x}(t_{n+1}) = \big(I + hA(t_n)\big)\,\mathbf{x}(t_n),$$
    \item Implicit Euler:
    $$\mathbf{x}(t_{n+1}) = \big(I-hA(t_{n+1})\big)^{-1}\mathbf{x}(t_n).$$
\end{itemize}
Implicit methods typically require more computations, but result in more stable solutions for larger time-steps. For Implicit Euler we see that a matrix inversion is needed, which is incredibly costly and warrants further numerical methods. 

Consider the following system:
\begin{equation}
\begin{cases}
    H(t) = \sigma_x + \sigma_y + \sigma_z, \\
    \rho_0 = \sigma_x.
\end{cases}
\label{eqn:simple_system}
\end{equation}
\noindent
Plotting the solution of (\ref{eqn:simple_system}) for these methods and QuTiP's \texttt{mesolve} for $k=5$, we see how both Euler methods violate the conservation of energy: the normalised $\sigma_x$ component oscillates between $1$ and $-1$, but these oscillations increase and decrease over time for the two methods, respectively.

\begin{figure}
    \centering
    \includegraphics[width=0.8\textwidth]{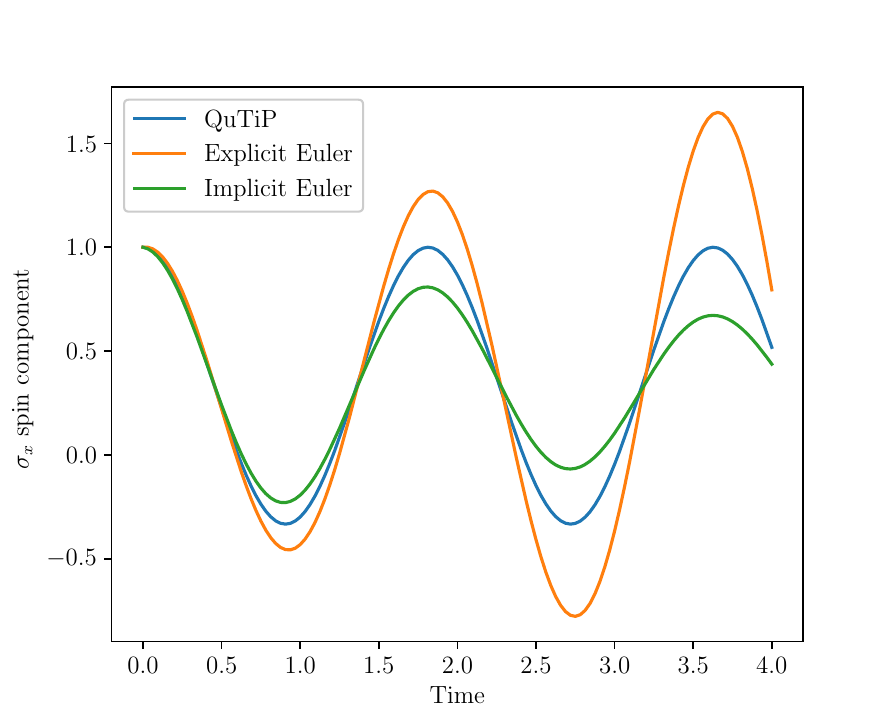}
    \caption{Normalised component of system $(2)$ in $\sigma_x$ direction over time for $k=5$.}
\end{figure}

\subsubsection{The Trapezoidal Rule}

Continuing with the same ODE introduced previously, the trapezoidal rule is as follows:
$$\mathbf{x}(t_{n+1}) = \bigg(\big(I - \frac{h}{2}A(t_n)\big)\big(I + \frac{h}{2}A(t_n)\big)\bigg)^{-1}\mathbf{x}(t_n).$$
Being a second-order implicit method, the trapezoidal rule is more stable and converges to our reference solution faster than both of the Euler methods. 

Although we won't employ this method for larger systems, it serves to highlight how midpoint methods converge faster than their initial point counterparts. To change the previous methods from initial point to midpoint, we replace $t_n$ with $t_n + h/2$.

\begin{figure}[H]
    \centering
    \includegraphics[width=0.9\textwidth]{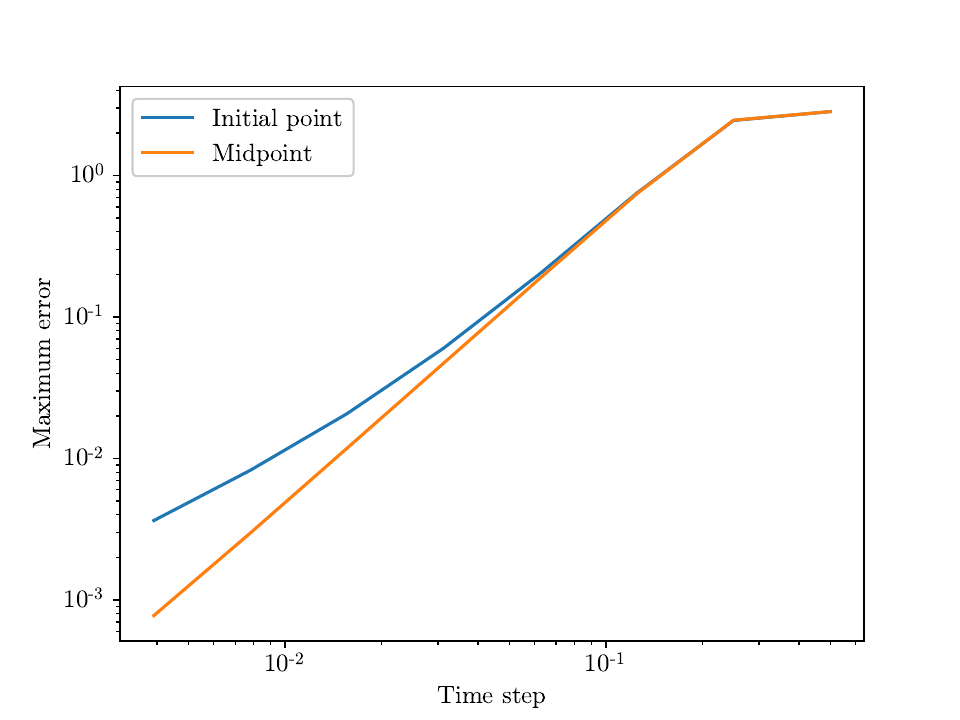}
    \caption{log-log plot of maximum error against time-step $h$ for the trapezoidal rule.}
    \label{fig:traploglog}
\end{figure}
\noindent
From Figure \ref{fig:traploglog} we see that for system $(1)$, initial point trapezoidal rule has a rate of convergence $1.255\approx1$, whereas midpoint has a rate of convergence $1.965\approx2$.

\subsubsection{RK4}
The most widely known Runge-Kutta method---commonly referred to as RK4---is an explicit fourth-order method that uses the previous value of $\mathbf{x}$ and the weighted average of four increments to determine the following value of $\mathbf{x}$. The system is evolved as follows:
$$\mathbf{x}(t_{n+1}) = \mathbf{x}(t_n) + \frac{1}{6}h(k_1 + 2k_2 + 2k_3 + k_4),$$
where 
\begin{align*}
    k_1 &= A(t_n)\,\mathbf{x}(t_n), \\
    k_2 &= A\left(t_n + \frac{h}{2}\right)\left(\mathbf{x}(t_n) + h\frac{k_1}{2}\right), \\
    k_3 &= A\left(t_n + \frac{h}{2}\right)\left(\mathbf{x}(t_n) + h\frac{k_2}{2}\right), \\
    k_4 &= A\left(t_n + h\right)\left(\mathbf{x}(t_n) + hk_3\right).
\end{align*}
When the right-hand side of the ODE is independent of $\mathbf{x}$, RK4 reduces to Simpson's rule \cite{Suli}. Interestingly, when $A(t)$ is independent of $t$, the function $P(t)$ for RK4 becomes
$$I + hA + \frac{h^2}{2!}A^2 + \frac{h^3}{3!}A^3 + \frac{h^4}{4!}A^4,$$
which is the first four terms of the Taylor expansion of the matrix exponential.

\begin{figure}[H] 
    \centering
    \includegraphics[width=0.9\textwidth]{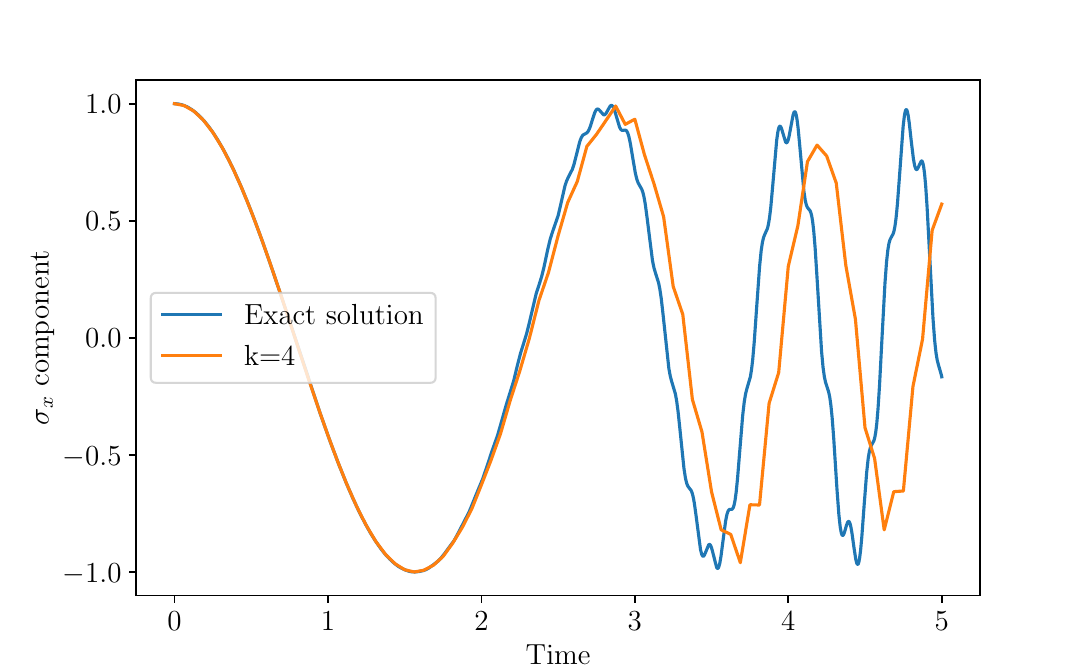}
    \caption{The true solution and the RK4 approximation of the system (\ref{eqn:hocp1}) for $k=4$.}
    \label{fig:rk4}
\end{figure}

\subsection{Approximating the Matrix Exponential}

An important function in the solution of differential equations is the matrix exponential. Here we discuss some more sophisticated methods that can be implemented in numerically solving ODEs.

\subsubsection{Taylor Series}
This method directly calculates the matrix exponential using the definition
$$e^{A} = I + A + A^2/2! + \dotsc,$$
which is truncated to the first $K$ terms, giving
$$ T_K(A) = \sum_{j=0}^K A^j/j!$$
as the approximation for $e^A$.

An arbitrarily high degree of accuracy can be achieved given a large enough $K$ is used, but even when approximating the scalar exponential this method is substandard. One may also find example matrices which cause "catastrophic cancellation" in floating point arithmetic when using "short arithmetic". This is not, however, a direct consequence of the approximation, but a result of using insufficiently high relative accuracy in the calculation \cite{Nineteen}.

Liou \cite{LiouExpm} describes an approach to calculating $e^{tA}$ via power series, writing
$$e^{tA} = M + R,$$
where $M$ is the $K$-term approximation
$$\sum_{j=0}^K \frac{t^j A^j}{j!}$$
and $R$ the remainder 
$$\sum_{j=K+1}^\infty \frac{t^j A^j}{j!}.$$
In order the achieve an accuracy of $d$ significant digits, we require 
\begin{equation}
    |r_{ij}| \leq 10^{-d}\,|m_{ij}|,
    \label{eqn:dsigfig}
\end{equation}
for the elements of $R$ and $M$. The upper bound derived by Liou for $|r_{ij}|$, given $K$ terms in the approximation, was sharpened by Everling \cite{EverlingExpm} to 
$$|r_{ij}| \leq \norm{\frac{t^K A^K}{K!}}\,\frac{t\norm{A}}{K+1}\;\frac{1}{1-\epsilon},\;\;\;\epsilon = \frac{t\norm{A}}{K+2},$$
which allows for checking whether the desired accuracy defined by (\ref{eqn:dsigfig}) has been achieved.

This approach is much worse than some of the methods addressed later, but is worth covering briefly for completeness-sake. One issue is that it does not take advantage of any of the properties of $A$, such as its eigenvalues or Hermitivity, as other methods do.

\subsubsection{Padé Approximants}
When approximating a function, the Taylor series expansion is limiting since polynomials are not a good class of functions to use if the approximated function has singularities. Rational functions are the simplest functions with singularities, making them an obvious first choice for a better approach \cite{Assche}.

The Padé approximation is a rational function with numerator of degree $\leq p$ and denominator of degree $\leq q$. We may adapt this approach to calculating the matrix exponential as follows \cite{Nineteen}:
$$ e^A \approx R_{pq}(A) = [D_{pq}(A)]^{-1} N_{pq}(A),$$
where
$$N_{pq}(A) = \sum_{j=0}^{q} \frac{(p+q-j)!p!}{(p+q)!j!(p-j)!}A^j$$
and
$$ D_{pq}(A) = \sum_{j=0}^{q} \frac{(p+q-j)!q!}{(p+q)!j!(q-j)!]}(-A)^j.$$
Rational approximations of the matrix exponential are preferred over polynomial approximations due to their better stability properties. When $p \leq q$ the Padé approximation gives rise to unconditionally stable methods \cite{VargaPade}.

When the norm of A is not too large, $p=q$ is a preferred choice for the order of the numerator and denominator. However, the computational costs and rounding errors for Padé approximants increase as $\norm{A}$ increases, where $\norm{A}$ denotes the matrix operator norm. We may mitigate this by scaling and squaring, which takes advantage of the following property:
$$e^A = (e^{A/{2^j}})^{2^j}.$$
This allows a $j$ to be chosen such that the Padé approximation is not too costly. This is the algorithm used by SciPy's \texttt{expm} function \cite{Higham}.

Given an error tolerance, we may determine an appropriate $q$ and $j$ to approximate $e^{A/{2^j}}$ using Padé approximants, and multiply to get $e^A$. For these values of $q$ and $j$, we find that Padé approximants are more efficient than a Taylor approximant. Specifically, for a small $\norm{A}$, the Taylor expansion requires about twice as much work for the same level of accuracy \cite{Nineteen}. The appendices contains a table of values for which a diagonal Padé approximant is optimal, given a norm $\norm{A}$ and error tolerance $\epsilon$.

\subsubsection{Krylov Subspace Methods}
It is not always necessary to calculate $e^A$ in its entirety. Sometimes we only need the product of the matrix exponential with a vector. Even if $A$ is a sparse matrix $e^A$ will typically still be dense, so when dealing with large matrix systems it may be computationally infeasible to calculate $e^A$ completely. This is easily seen with quantum systems, where a ten spin system results in a Liouvillian with dimension $2^{20}$. 

Given a matrix $A\in\mathbb{C}^{n\times n}$ and a compatible vector $\mathbf{b}$, we can approximate $e^A\mathbf{b}$ using a Krylov subspace method. Being iterative methods, they produce a sequence of improving approximations, and achieve the actual solution (up to arithmetic accuracy) with $n$ iterations.

We consider an order-$m$ Krylov subspace,
$$\mathcal{K}_m(A,\mathbf{v}) = \text{span}\lbrace \mathbf{v}, A\mathbf{v}, \dotsc, A^{m-1}\mathbf{v}\rbrace,$$
where $\mathbf{v} = \mathbf{b}/\norm{\mathbf{b}}$, and wish to find an orthonormal basis. The approximation to $e^A\mathbf{b}$ is an element of this space. For Hermitian matrices, we employ the Lanczos algorithm with $m\leq n$ iterations \cite{Lanczos}. The algorithm outputs an $n\times m$ matrix $V_m$ with orthonormal columns and an $m \times m$ real, symmetric, tridiagonal matrix $T_m$ such that $T_m = V_m^\dagger\,A\,V_m.$ For $m=n$, $V_m$ is a unitary matrix and so $A \approx V_m\,T_m\,V_m^\dagger.$ Then,
$$e^A\,\mathbf{b} \,\approx\, \norm{\mathbf{b}} V_m\,e^{T_m}\,e_1,$$
where $e_1$ is the first column of the $m$-dimensional identity matrix. Since $T_m$ is tridiagonal, Padé approximants can be used to calculate $e^{T_m}$ efficiently \cite{SaadPadeKrylov}.

Krylov methods are preferred over polynomial or rational function approximations for large and/or sparse matrices, since it replaces calculating $e^A$ with calculating the much easier $e^{T_m}$. This is achieved through approximating the eigenvalues of $A$ with the eigenvalues of $H_m$.

For a negative semi-definite Hermitian matrix $A$, with eigenvalues in the interval $\left[-4\rho,0\right]$, a scalar $t$, and a unit vector $v$, Hochbruck and Lubich \cite{HochbruckKrylov} derived a bound for the error,
$$\varepsilon_m := \norm{e^{tA}v - V_me^{tT_m}e_1}.$$
The error is bounded as follows:
\begin{equation}
\varepsilon_m \leq
\begin{cases}
    10\,e^{-m^2 / (5\rho t)},\;\;\;\;\;\;\;\;\;\;\;\;\;\;\;\;\;\;\;\sqrt{4\rho t\vphantom{|}}\leq m \leq 2\rho t, \\
    10(\rho t)^{-1} e^{-\rho t}\Big(\dfrac{e\rho t}{m}\Big)^m,\;\;\;\;\;\;\;\;\;\;\;\;\;\;\;\;\;\;\;m\geq 2\rho t.
\end{cases}
\label{eqn:KrylovError}
\end{equation}

\begin{figure}[H]
    \centering
    \includegraphics[width=\textwidth]{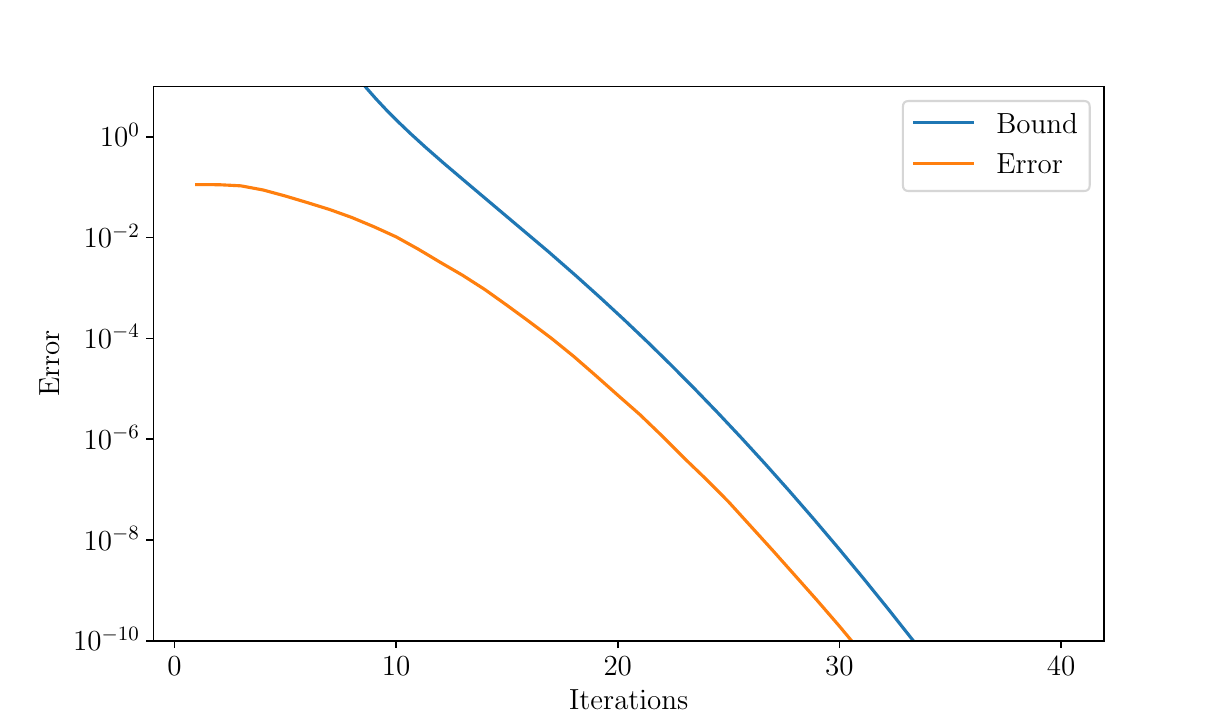}
    \caption{The error bound (\ref{eqn:KrylovError}) and the experimental error of our example.}
    \label{fig:KrylovError}
\end{figure}
\noindent
Following the approach of \cite{HochbruckKrylov}, we consider a diagonal matrix with eigenvalues in the interval $[-40,0]$ and a 1001-dimensional unit vector. This error bound and the error in the approximation from our implementation---see appendix---is shown in Figure \ref{fig:KrylovError}.

\pagebreak

\section{The Magnus Expansion}
In this section we aim to implement a method which converges faster than current approaches by applying the Magnus expansion to the differential equation (\ref{eqn:ode}), providing an exponential representation of the solution to such an equation \cite{Magnus}. Solutions derived from the Magnus expansion give approximations which preserve certain qualitative properties of the exact solution, making it a desirable method to consider \cite{BlanesMagnus}.

We first discuss the theory behind the approach, detailing the components relevant to our implementation, and then optimise the solution to best solve ODEs driven by Hamiltonians in the form of (\ref{eqn:Hamiltonian}). In addition to this, we describe how our code was implemented in the Python package MagPy and give an overview of how to use it.

\subsection{Definition}
Given an $n\times n$ time-dependent matrix $A(t)$, the Magnus expansion provides a solution to the initial value problem
$$\mathbf{x}'(t) = A(t)\,\mathbf{x}(t),\;\;\mathbf{x}(t_0)=\mathbf{x}_0.$$
for a vector $\mathbf{x}(t)$. The solution takes the form
$$\mathbf{x}(t) = \exp{\left(\Omega(t,t_0)\right)}\,\mathbf{x_0},$$
where $\Omega(t, t_0)$ is the infinite series,
$$ \Omega(t, t_0) = \sum_{k=1}^{\infty}\Omega_k(t),$$
where $\Omega_k(t)$ are matrix-valued functions. We will apply this approximation iteratively, using the same approach as described at the beginning of section 4. For $t_0 = 0$ we write $\Omega(t,0)$ as $\Omega(t)$.
Arbitrarily high degrees of accuracy can be achieved given a sufficient number of terms in the summation, but we will truncate to the first two terms,
\begin{align}
    \Omega_1(t) &= \int_0^t A(s)\,ds, \\
    \Omega_2(t) &= \int_0^t \int_0^s \left[A(s),A(r)\right]dr\,ds.
    \label{eqn:MagnusTerms}
\end{align}
\noindent
When $A(s)$ and $A(r)$ commute $\Omega(t)$ reduces to $\Omega_1(t)$, and for a time independent $A$ the commutator terms cancel and this reduces to the well-known solution
$$\mathbf{x}(t) = \exp\{\left(t-t_0)A\right)\}\mathbf{x_0}.$$

\subsection{Reformulation}
Since the Liouvilian takes the form of a matrix-valued function, it is difficult to implement the integral terms in the Magnus expansion directly. The core idea here is to write the integrals in terms of scalar integrals of $f_j(t)$, $g_j(t)$, and $\Omega_j$, reducing the complexity of the code and the number of computations needed. We consider the two terms separately, detailing the necessary properties used in implementing the Magnus method in Python. For the first term the following theorem is sufficient to write the integral in the required form.

\begin{theorem}[1st Term of Magnus Expansion]
\label{thm:Magnus1stTerm}
    For $A(t) = -iL\left\{H(t)\right\},$
    $$\int_0^t A(s)\,ds = -iL\left\{\sum_{j=1}^n K_j(s) + H_J\,t\right\},$$
    where 
    $$K_j(s) = \int_0^t f_j(s)\,ds\,\mathbf{I}_j^x + \int_0^t g_j(s)\,ds\,\mathbf{I}_j^y + \Omega_j\,t\,\mathbf{I}_j^z.$$
\end{theorem}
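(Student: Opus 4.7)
The plan is to reduce the claim to three results already established in the excerpt: linearity of integration with respect to the scalar factor $-i$, Lemma \ref{Integral of liouvillian} (which swaps the integral with the Liouvillian superoperator), and Lemma \ref{Integral of Hamiltonian} (which evaluates $\int_a^b H(t)\,dt$ componentwise in terms of the scalar integrals of $f_j$, $g_j$ and $\Omega_j$). Since no new structural property of $L$ or $H$ needs to be introduced, the proof is essentially a chain of substitutions.

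First I would write
\[
\int_0^t A(s)\,ds = \int_0^t -iL\{H(s)\}\,ds = -i \int_0^t L\{H(s)\}\,ds,
\]
pulling the scalar $-i$ outside by linearity of integration. Next I would apply Lemma \ref{Integral of liouvillian} with $X = H(s)$ to move the integral inside $L$, obtaining
\[
-i \int_0^t L\{H(s)\}\,ds = -iL\left\{\int_0^t H(s)\,ds\right\}.
\]
Then I would invoke Lemma \ref{Integral of Hamiltonian} with $a=0$, $b=t$ to replace the inner integral by
\[
\sum_{j=1}^n \left(\int_0^t f_j(s)\,ds\,\mathbf{I}_j^x + \int_0^t g_j(s)\,ds\,\mathbf{I}_j^y + \Omega_j\, t\,\mathbf{I}_j^z\right) + H_J\, t,
\]
and finally recognise the bracketed summands as $K_j$ as defined in the theorem statement.

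There is no substantive obstacle here; the only mild subtlety is bookkeeping of the dummy variable (the symbol $s$ inside $K_j(s)$ in the statement is a dummy of integration, so $K_j$ really depends on the upper limit $t$, not on $s$), but this is purely notational and does not affect the proof. Accordingly the entire argument is three lines long and amounts to citing the two earlier lemmas in sequence.
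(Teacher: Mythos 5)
Your proof is correct and takes essentially the same route as the paper, which likewise just cites Lemma \ref{Integral of liouvillian} to move the integral inside $L$ and then Lemma \ref{Integral of Hamiltonian} with $a=0$, $b=t$. Your observation that the $s$ in $K_j(s)$ is a spurious dummy variable (so $K_j$ really depends on $t$) is a fair catch of a notational slip in the theorem statement, not a gap in the argument.
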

\begin{proof}
    Lemma \ref{Integral of liouvillian} gives us
    $$\int_0^t A(s)\,ds = -iL\left\{\int_0^t H(s)\,ds\right\},$$
    and applying Lemma \ref{Integral of Hamiltonian} to $\int_0^t H(s)\,ds$ gives the result.
\end{proof}

\subsubsection{Derivation of the Second Term}
For the second term of the expansion we are concerned with calculating a double integral of the commutator,
$$\big[A(s),A(r)\big],$$
where 
$$A(t) = -iL\left\{H(t)\right\}.$$
We build up to this through the following results. Theorem \ref{Double Integral of Commutator of Liouvillian} reduces this to first calculating the commutator of 
$$H(t) = \sum_{j=1}^n \mathbf{I}_j\left\{H_j(t)\right\} + H_J,$$
integrating, and then applying the Liouvillian superoperator. 
\begin{lemma}
\label{I of Single Hamiltonian Commutator}
    For $H_j(t) = f_j(t)\,\sigma_x + g_j(t)\,\sigma_y + \Omega_j\,\sigma_z,\;j\in\mathbb{N},$
    \begin{multline*}
        \mathbf{I}_j\left\{\big[H_j(r),\,H_j(s)\big]\right\} = 2\,i\,\Omega_j\big(g_j(r) - g_j(s)\big)\mathbf{I}_j^x \\
        + 2\,i\,\Omega_j\big(f_j(s) - f_j(r)\big)\mathbf{I}_j^y \\ 
        + 2\,i\,\big(f_j(r)g_j(s) - g_j(r)f_j(s)\big)\mathbf{I}_j^z.
    \end{multline*}
\end{lemma}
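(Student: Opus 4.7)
The plan is to work out the commutator $[H_j(r),H_j(s)]$ first at the level of $2\times 2$ matrices, and then apply $\mathbf{I}_j$ term-by-term using linearity of the Kronecker product. This is a direct computation, so the strategy is really just careful bookkeeping of the Pauli commutation relations quoted earlier in the excerpt.

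First I would substitute the explicit form $H_j(t)=f_j(t)\,\sigma_x+g_j(t)\,\sigma_y+\Omega_j\,\sigma_z$ into $[H_j(r),H_j(s)]$ and expand bilinearly. This produces nine commutators of scalar multiples of Pauli matrices. The three diagonal terms $[\sigma_x,\sigma_x]$, $[\sigma_y,\sigma_y]$, $[\sigma_z,\sigma_z]$ vanish, leaving six cross terms. I would then apply the three relations $[\sigma_x,\sigma_y]=2i\sigma_z$, $[\sigma_y,\sigma_z]=2i\sigma_x$, $[\sigma_z,\sigma_x]=2i\sigma_y$, together with antisymmetry $[B,A]=-[A,B]$, to rewrite each of the six cross terms as a scalar multiple of a single Pauli matrix.

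Next I would collect the coefficients of $\sigma_x$, $\sigma_y$ and $\sigma_z$ separately. The two contributions to $\sigma_x$ come from the $\sigma_y\sigma_z$ and $\sigma_z\sigma_y$ pieces and combine to give $2i\,\Omega_j(g_j(r)-g_j(s))\sigma_x$; similarly the $\sigma_z\sigma_x$ and $\sigma_x\sigma_z$ pieces give $2i\,\Omega_j(f_j(s)-f_j(r))\sigma_y$; and the $\sigma_x\sigma_y$ and $\sigma_y\sigma_x$ pieces give $2i(f_j(r)g_j(s)-g_j(r)f_j(s))\sigma_z$. Careful sign tracking here is the only real hazard — in particular one must remember the minus sign in $[\sigma_x,\sigma_z]=-[\sigma_z,\sigma_x]=-2i\sigma_y$ and the corresponding flips for the other reversed pairs; otherwise the $\sigma_y$ coefficient will come out with the wrong sign.

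Finally I would apply $\mathbf{I}_j$ to both sides. Since $\mathbf{I}_j$ is linear (as noted after Definition \ref{I}) and $\mathbf{I}_j^a=\mathbf{I}_j\{\sigma_a\}$ by the abbreviation introduced just before that, the scalar factors pull out and each $\sigma_a$ is replaced by $\mathbf{I}_j^a$, yielding exactly the stated identity. This last step is essentially a notational rewrite and is therefore not where any difficulty lies; the entire proof really stands or falls on the bilinear expansion and sign tracking in the first two steps.
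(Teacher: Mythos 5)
Your proposal is correct and is essentially identical to the paper's own proof: expand $\big[H_j(r),\,H_j(s)\big]$ bilinearly, discard the vanishing diagonal commutators, apply the Pauli relations $[\sigma_x,\sigma_y]=2i\sigma_z$, $[\sigma_y,\sigma_z]=2i\sigma_x$, $[\sigma_z,\sigma_x]=2i\sigma_y$ together with antisymmetry, collect coefficients, and finish by the linearity of $\mathbf{I}_j$. Your collected coefficients of $\sigma_x$, $\sigma_y$, and $\sigma_z$ (including the sign flip in $[\sigma_x,\sigma_z]=-2i\sigma_y$) all match the stated identity, so no further comment is needed.
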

\begin{proof}
    Expanding $\big[H_j(r),\,H_j(s)\big]$, applying the commutator properties of the Pauli matrices, and simplifying gives
    \begin{multline*}
        \big[H_j(r),\,H_j(s)\big] = 2\,i\,\Omega_j\big(g_j(r) - g_j(s)\big)\sigma_x + 2\,i\,\Omega_j\big(f_j(s) - f_j(r)\big)\sigma_y \\ + 2\,i\,\big(f_j(r)g_j(s) - g_j(r)f_j(s)\big)\sigma_z.
    \end{multline*}
    Then the linearity of $\mathbf{I}$ yields the result.
\end{proof}

\begin{corollary}
\label{Integral of I of Single Hamiltonian Commutator}
    \begin{align*}
        \int_0^t \int_0^s \mathbf{I}_j\left\{\big[H_j(r),\,H_j(s)\big]\right\}dr\,ds \\ 
        = 2\,i&\,\Omega_j\int_0^t\int_0^s g_j(r) - g_j(s)\,dr\,ds\,\mathbf{I}_j^x \\ 
        +& 2\,i\,\Omega_j\int_0^t\int_0^s f_j(s) - f_j(r)\,dr\,ds\,\mathbf{I}_j^y \\ 
        +& 2\,i\int_0^t\int_0^s f_j(r)g_j(s) - g_j(r)f_j(s)\,dr\,ds\,\mathbf{I}_j^z.
    \end{align*}
\end{corollary}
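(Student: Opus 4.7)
The plan is to apply Lemma \ref{I of Single Hamiltonian Commutator} pointwise in $(r,s)$ and then integrate both sides over the triangular region $\{0\leq r\leq s\leq t\}$. Since the lemma already expresses $\mathbf{I}_j\{[H_j(r),H_j(s)]\}$ as a sum of three terms, each of the form (scalar function of $r,s$) times a fixed matrix $\mathbf{I}_j^a$ with $a\in\{x,y,z\}$, the proof will amount to invoking linearity of integration.

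First I would substitute the expression from Lemma \ref{I of Single Hamiltonian Commutator} inside the double integral on the left-hand side. Next, I would distribute the operator $\int_0^t\int_0^s\,dr\,ds$ across the three summands using linearity. Finally, since the matrices $\mathbf{I}_j^x,\mathbf{I}_j^y,\mathbf{I}_j^z$ and the scalar constants $2i$ and $2i\Omega_j$ do not depend on the integration variables $r$ or $s$, they pull outside their respective integrals, leaving exactly the three scalar double integrals in the claimed identity.

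There is no real obstacle here: this is a one-line corollary of the preceding lemma, and only linearity of the integral is used. The only thing to be mindful of is that the inner integral has upper limit $s$ rather than $t$, but this plays no role since we are simply naming the iterated scalar integrals on the right-hand side, not evaluating them. No Fubini argument or change of order is needed at this stage.
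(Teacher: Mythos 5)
Your proposal is correct and matches the paper's proof exactly: the paper likewise derives the corollary directly from Lemma \ref{I of Single Hamiltonian Commutator} together with linearity of integration. Your additional remark that the matrices $\mathbf{I}_j^x$, $\mathbf{I}_j^y$, $\mathbf{I}_j^z$ and the scalars $2i$, $2i\,\Omega_j$ are independent of the integration variables simply makes explicit what the paper leaves implicit, and no Fubini-type argument is needed, as you note.
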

\begin{proof}
    This follows directly from Lemma \ref{I of Single Hamiltonian Commutator} and from linearity of integration.
\end{proof}
\noindent
Now we look at the commutator of $H_0(t) = \sum_{j=1}^n \mathbf{I}_j\left\{H_j(t)\right\}$ and prove a useful result 
\begin{lemma}
\label{Commutator of non-interacting Hamiltonian}
    Given $H_0(t) = \sum_{j=1}^n \mathbf{I}_j\left\{H_j(t)\right\},$
    $$\big[H_0(r),\,H_0(s)\big] = \sum_{j=1}^n \mathbf{I}_j\left\{\left[H_j(r),H_j(s)\right]\right\}.$$
\end{lemma}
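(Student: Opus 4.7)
The plan is to expand $[H_0(r), H_0(s)]$ using bilinearity of the commutator into a double sum over indices $i, j \in \{1, \ldots, n\}$, and then split this sum into the diagonal part ($i = j$) and the off-diagonal part ($i \neq j$). The target identity says the off-diagonal part must vanish and the diagonal part must equal $\sum_j \mathbf{I}_j\{[H_j(r), H_j(s)]\}$, so these are the two things to verify.

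First I would write
$$[H_0(r), H_0(s)] = \sum_{i=1}^n \sum_{j=1}^n \bigl[\mathbf{I}_i\{H_i(r)\},\, \mathbf{I}_j\{H_j(s)\}\bigr].$$
For the off-diagonal terms ($i \neq j$), I would apply part 1 of Lemma \ref{S properties} to rewrite each product as an $\mathbf{S}$-object: $\mathbf{I}_i\{H_i(r)\}\mathbf{I}_j\{H_j(s)\} = \mathbf{S}_{i,j}\{H_i(r), H_j(s)\}$ and similarly for the reverse order. Then part 2 of the same lemma gives $\mathbf{S}_{i,j}\{H_i(r), H_j(s)\} = \mathbf{S}_{j,i}\{H_j(s), H_i(r)\}$, which is exactly the other product. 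Hence each off-diagonal commutator is zero, so operators acting on different tensor slots commute and the cross terms drop out.

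For the diagonal terms ($i = j$), part 1 of Lemma \ref{S properties} combined with the $i = j$ case of Definition \ref{S} gives
$$\mathbf{I}_j\{H_j(r)\}\,\mathbf{I}_j\{H_j(s)\} = \mathbf{S}_{j,j}\{H_j(r), H_j(s)\} = \mathbf{I}_j\{H_j(r)H_j(s)\},$$
and likewise in the reverse order. Subtracting and invoking the linearity of $\mathbf{I}_j$ (which was noted following Definition \ref{I}) gives
$$\bigl[\mathbf{I}_j\{H_j(r)\},\, \mathbf{I}_j\{H_j(s)\}\bigr] = \mathbf{I}_j\{H_j(r)H_j(s) - H_j(s)H_j(r)\} = \mathbf{I}_j\bigl\{[H_j(r), H_j(s)]\bigr\}.$$
Summing over $j$ then yields the claim.

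The main obstacle, as far as there is one, is simply keeping the bookkeeping of indices clean and making sure the vanishing of off-diagonal terms is justified by a cited lemma rather than by hand-waving about ``operators on different sites commute''; the Lemma \ref{S properties} part 2 step is precisely what licenses that. Everything else is bilinearity of the commutator and linearity of $\mathbf{I}_j$.
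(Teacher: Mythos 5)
Your proposal is correct and follows essentially the same route as the paper's own proof: both expand the commutator into a double sum, use Lemma \ref{S properties}.1 to convert products into $\mathbf{S}$-objects, invoke Lemma \ref{S properties}.2 to cancel the $i \neq j$ terms, and then apply Definition \ref{S} with linearity of $\mathbf{I}_j$ to collapse the diagonal terms. Your explicit split into diagonal and off-diagonal parts is merely a presentational refinement of the same argument.
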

\begin{proof}
    By linearity of summation and Lemma \ref{S properties}.1,
    \begin{align*}
        \big[H_0(r),\,H_0(s)\big] &= \sum_{i,\,j=1}^n \left(\mathbf{I}_i\big\{H_i(r)\big\}\,\mathbf{I}_j\big\{H_j(s)\big\} - \mathbf{I}_j\big\{H_j(s)\big\}\,\mathbf{I}_i\big\{H_i(r)\big\}\right), \\
        &= \sum_{i,\,j=1}^n \left(\mathbf{S}_{i,\,j}\big\{H_i(r),H_j(s)\big\} - \mathbf{S}_{j,\,i}\big\{H_j(s),H_i(r)\big\}\right).
    \end{align*}
    Applying Lemma \ref{S properties}.2,
    $$ \mathbf{S}_{i,\,j}\big\{H_i(r),H_j(s)\big\} = \mathbf{S}_{j,\,i}\big\{H_j(s),H_i(r)\big\},$$
    for $i\neq j$. And so, by Definition \ref{S},
    \begin{align*}
        \big[H_0(r),\,H_0(s)\big] &= \sum_{j=1}^n \left(\mathbf{S}_{j,\,j}\big\{H_j(r),H_j(s)\big\} - \mathbf{S}_{j,\,j}\big\{H_j(s),H_j(r)\big\}\right) \\
        &= \sum_{j=1}^n \left(\mathbf{I}_j\big\{H_j(r)\,H_j(s)\big\} - \mathbf{I}_j\big\{H_j(s)\,H_j(r)\big\}\right) \\
        &= \sum_{j=1}^n \mathbf{I}_j\left\{\big[H_j(r),\,H_j(s)\big]\right\}.
    \end{align*}
\end{proof}

\begin{corollary}
\label{Integral of non-interacting commutator}
    \begin{multline*}
        \int_0^t \int_0^s \big[H_0(r),\,H_0(s)\big]\,dr\,ds \\ = \sum_{j=1}^n\bigg( 2\,i\,\Omega_j\int_0^t\int_0^s g_j(r) - g_j(s)\,dr\,ds\,\mathbf{I}_j^x \\ 
        + 2\,i\,\Omega_j\int_0^t\int_0^s f_j(s) - f_j(r)\,dr\,ds\,\mathbf{I}_j^y \\ 
        + 2\,i\int_0^t\int_0^s f_j(r)g_j(s) - g_j(r)f_j(s)\,dr\,ds\,\mathbf{I}_j^z\bigg).
    \end{multline*}
\end{corollary}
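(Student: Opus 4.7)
The plan is to assemble the statement by directly chaining together two results already established in the text: Lemma \ref{Commutator of non-interacting Hamiltonian}, which rewrites $[H_0(r),H_0(s)]$ as a sum over single-spin commutators wrapped in $\mathbf{I}_j$; and Corollary \ref{Integral of I of Single Hamiltonian Commutator}, which evaluates the double integral of each such wrapped commutator in terms of scalar integrals of $f_j$ and $g_j$ together with the basis operators $\mathbf{I}_j^x,\mathbf{I}_j^y,\mathbf{I}_j^z$.

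First I would apply Lemma \ref{Commutator of non-interacting Hamiltonian} under the double integral, replacing the integrand $[H_0(r),H_0(s)]$ by the finite sum $\sum_{j=1}^n \mathbf{I}_j\{[H_j(r),H_j(s)]\}$. Then I would use linearity of (iterated) integration to pull both the finite sum $\sum_{j=1}^n$ and the operator $\mathbf{I}_j$ outside the $dr\,ds$ integrals; $\mathbf{I}_j$ is linear (it is a Kronecker product with fixed identity factors) and independent of $r,s$, so this commutes past $\int_0^t\int_0^s$ without issue.

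At that point each summand has become $\int_0^t\int_0^s \mathbf{I}_j\{[H_j(r),H_j(s)]\}\,dr\,ds$, which is exactly the left-hand side of Corollary \ref{Integral of I of Single Hamiltonian Commutator}. Substituting in its right-hand side for each $j$ and collecting the sum gives the claimed expression.

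There is no real obstacle here: the only things to be careful about are the bookkeeping of the sum/integral interchange and making sure the constants $2i\,\Omega_j$ and $2i$ are carried through from the single-spin result unaltered. In particular, no new commutator identities, no cross-terms between distinct spins, and no interaction contribution from $H_J$ appear, since the statement concerns $H_0$ only and the cross-terms were already eliminated inside the proof of Lemma \ref{Commutator of non-interacting Hamiltonian} via parts 1 and 2 of Lemma \ref{S properties}.
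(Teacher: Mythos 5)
Your proposal is correct and matches the paper's own proof, which likewise integrates Lemma \ref{Commutator of non-interacting Hamiltonian} and then applies Corollary \ref{Integral of I of Single Hamiltonian Commutator} to each summand. You merely spell out the sum/integral interchange and the linearity of $\mathbf{I}_j$ explicitly, which the paper leaves implicit.
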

\begin{proof}
    Integrate Lemma \ref{Commutator of non-interacting Hamiltonian} and apply Corollary \ref{Integral of I of Single Hamiltonian Commutator}.
\end{proof}
\noindent
Next, we consider the commutator of $H(t) = H_0(t) + H_J.$ By expanding and simplifying we can write
$$\big[H(r),H(s)\big] = \big[H_0(r),H_0(s)\big] + \big[H_J,H_0(s)\big] + \big[H_0(r),H_J\big].$$
We have already dealt with the first term on the right hand side, and now deal with the other two terms. The following lemma and corollary tell us how.
\begin{lemma}
\label{Commutator of Interacting Part}
    Given $H_0(t) = \sum_{j=1}^n \mathbf{I}_j\left\{H_j(t)\right\}$ and $H_J\in\mathbb{C}^{n^2 \times n^2},$
    \begin{multline*}
        \big[H_J,H_0(s)\big] + \big[H_0(r),H_J\big] = \sum_{j=1}^n \bigg(\left(f_j(r)-f_j(s)\right)\big[\mathbf{I}_j^x,H_J\big] \\ + \left(g_j(r)-g_j(s)\right)\big[\mathbf{I}_j^y,H_J\big]\bigg).
    \end{multline*}
\end{lemma}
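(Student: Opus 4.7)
My plan is to exploit the bilinearity and antisymmetry of the commutator to combine the two bracket expressions into a single commutator, and then observe that the time-independent parts of $H_0$ drop out of the difference. Specifically, I would first rewrite
$$[H_J,\,H_0(s)] + [H_0(r),\,H_J] = [H_0(r),\,H_J] - [H_0(s),\,H_J] = [H_0(r)-H_0(s),\,H_J]$$
using antisymmetry and left-linearity of $[\cdot,\cdot]$. This immediately shifts the problem from analysing two separate commutators to analysing the difference $H_0(r)-H_0(s)$.

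Next, I would substitute the explicit form
$$H_0(t) = \sum_{j=1}^n \left(f_j(t)\,\mathbf{I}_j^x + g_j(t)\,\mathbf{I}_j^y + \Omega_j\,\mathbf{I}_j^z\right)$$
and take the difference at times $r$ and $s$. The crucial observation here is that the $\Omega_j\,\mathbf{I}_j^z$ terms are time-independent and therefore cancel exactly, leaving
$$H_0(r)-H_0(s) = \sum_{j=1}^n\left((f_j(r)-f_j(s))\,\mathbf{I}_j^x + (g_j(r)-g_j(s))\,\mathbf{I}_j^y\right).$$

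Finally, plugging this into $[H_0(r)-H_0(s),\,H_J]$ and distributing using left-linearity of the commutator produces the claimed sum, with the scalar coefficients $f_j(r)-f_j(s)$ and $g_j(r)-g_j(s)$ pulled out in front of $[\mathbf{I}_j^x, H_J]$ and $[\mathbf{I}_j^y, H_J]$ respectively. There is no real obstacle here---the argument is bookkeeping---but the step worth flagging is the cancellation of the $\Omega_j$ contribution, since it is the reason no $[\mathbf{I}_j^z, H_J]$ term appears on the right-hand side; if $\Omega_j$ were time-dependent, such a term would survive.
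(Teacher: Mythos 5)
Your proof is correct and takes essentially the same route as the paper's: both arguments are pure bilinearity bookkeeping in which the time-independent $\Omega_j\,\mathbf{I}_j^z$ contributions cancel. The only difference is ordering---the paper expands $\big[H_J,H_0(s)\big]$ and $\big[H_0(r),H_J\big]$ separately and lets the terms $\Omega_j\big[H_J,\mathbf{I}_j^z\big]$ and $\Omega_j\big[\mathbf{I}_j^z,H_J\big]$ cancel upon summing, whereas you merge the two brackets into $\big[H_0(r)-H_0(s),\,H_J\big]$ first, so the cancellation happens before any commutator is expanded, a marginally more transparent arrangement of the identical computation.
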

\begin{proof}
    Note that
    \begin{align*}
        \big[H_J,H_0(s)\big] &= \sum_{j=1}^n \bigg(H_J \big(f_j(s)\mathbf{I}_j^x + g_j(s)\mathbf{I}_j^y + \Omega_j\mathbf{I}_j^z\big) \\ 
        \MoveEqLeft[-6] - \big(f_j(s)\mathbf{I}_j^x + g_j(s)\mathbf{I}_j^y + \Omega_j\mathbf{I}_j^z\big) H_J \bigg), \\
        &= \sum_{j=1}^n f_j(s)\big[H_J,\mathbf{I}_j^x\big] + g_j(s)\big[H_J,\mathbf{I}_j^y\big] + \Omega_j\big[H_J,\mathbf{I}_j^z\big],
    \end{align*}
    and similarly that
    \begin{align*}
        \big[H_0(r),H_J\big] &= \sum_{j=1}^n f_j(r)\big[\mathbf{I}_j^x,H_J\big] + g_j(r)\big[\mathbf{I}_j^y,H_J\big] + \Omega_j\big[\mathbf{I}_j^z,H_J\big].
    \end{align*}
    Summing these two equations gives the result.
\end{proof}

\begin{corollary}
\label{Integral of interacting part}
    \begin{multline*}
        \int_0^t \int_0^s \big[H_J,H_0(s)\big] + \big[H_0(r),H_J\big]\,dr\,ds\\ = \sum_{j=1}^n \bigg(\int_0^t \int_0^s f_j(r)-f_j(s)\,dr\,ds\big[\mathbf{I}_j^x,H_J\big] \\ + \int_0^t \int_0^s g_j(r)-g_j(s)\,dr\,ds\big[\mathbf{I}_j^y,H_J\big]\bigg).
    \end{multline*}
\end{corollary}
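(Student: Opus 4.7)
The plan is to carry out the same two-step strategy already used in Corollary \ref{Integral of non-interacting commutator}: apply the preceding lemma to rewrite the integrand, and then pass the scalar, time-dependent coefficients through the double integral using linearity.

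First I would invoke Lemma \ref{Commutator of Interacting Part} to replace the integrand $\big[H_J,H_0(s)\big] + \big[H_0(r),H_J\big]$ with the finite sum
$$\sum_{j=1}^n \bigg(\left(f_j(r)-f_j(s)\right)\big[\mathbf{I}_j^x,H_J\big] + \left(g_j(r)-g_j(s)\right)\big[\mathbf{I}_j^y,H_J\big]\bigg).$$
Crucially, each commutator $\big[\mathbf{I}_j^x,H_J\big]$ and $\big[\mathbf{I}_j^y,H_J\big]$ is a constant matrix: $H_J$ is time-independent by hypothesis, and $\mathbf{I}_j^x, \mathbf{I}_j^y$ do not depend on $r$ or $s$. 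So the only $r,s$-dependence sits in the scalar factors $f_j(r)-f_j(s)$ and $g_j(r)-g_j(s)$.

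Next I would apply $\int_0^t\!\int_0^s\!\cdot\,dr\,ds$ term by term. Since the sum over $j$ is finite and the constant matrices can be pulled outside both integrals, linearity of integration yields exactly
$$\sum_{j=1}^n \bigg(\int_0^t \int_0^s f_j(r)-f_j(s)\,dr\,ds\,\big[\mathbf{I}_j^x,H_J\big] + \int_0^t \int_0^s g_j(r)-g_j(s)\,dr\,ds\,\big[\mathbf{I}_j^y,H_J\big]\bigg),$$
which is the required right-hand side.

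There is no real obstacle here: the entire content is packed into Lemma \ref{Commutator of Interacting Part}, and the remainder is a mechanical use of linearity, mirroring the structure of Corollary \ref{Integral of I of Single Hamiltonian Commutator} and Corollary \ref{Integral of non-interacting commutator}. The only point worth flagging is the mild subtlety that $f_j(r)-f_j(s)$ and $g_j(r)-g_j(s)$ depend on both integration variables, so the double integral genuinely must be kept intact rather than being split into a product of one-dimensional integrals; but this is harmless since we simply leave the iterated integral symbolic.
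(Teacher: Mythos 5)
Your proposal is correct and matches the paper's proof exactly: the paper likewise derives this corollary by applying Lemma \ref{Commutator of Interacting Part} and then linearity of integration, with your observation that $\big[\mathbf{I}_j^x,H_J\big]$ and $\big[\mathbf{I}_j^y,H_J\big]$ are constant matrices being the (implicit) justification for pulling them outside the double integral. No gaps to report.
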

\begin{proof}
    This follows directly from Lemma \ref{Commutator of Interacting Part} and linearity of integration.
\end{proof}
\noindent
We now have all we need to write the main result for the second term,
$$\Omega_2(t) = \frac{1}{2}\int_0^t \int_0^s \left[A(s),A(r)\right]dr\,ds.$$

\begin{theorem}[2nd Term of Magnus Expansion]
\label{thm:Magnus2ndTerm}
    Given 
    $$H(t) = \sum_{j=1}^n f_j(t)\mathbf{I}_j^x + g_j(t)\mathbf{I}_j^y + \Omega_j\mathbf{I}_j^z + H_J,$$
    we can rewrite 
    $$\Omega_2(t) = \frac{1}{2}\int_0^t \int_0^s \left[A(s),A(r)\right]dr\,ds$$ 
    as
    \begin{multline*}
        \Omega_2(t) = \frac{1}{2} L\Bigg\{\sum_{j=1}^n \bigg( \int_0^t\int_0^s g_j(r) - g_j(s)\,dr\,ds\,\big(2\,i\,\Omega_j\,\mathbf{I}_j^x + \big[\mathbf{I}_j^y,H_J\big]\big) \\ 
        - \int_0^t\int_0^s f_j(r) - f_j(s)\,dr\,ds\,\big(2\,i\,\Omega_j\mathbf{I}_j^y + \big[H_J,\mathbf{I}_j^x\big]\big) \\ 
        + 2\,i\int_0^t\int_0^s f_j(r)g_j(s) - g_j(r)f_j(s)\,dr\,ds\,\mathbf{I}_j^z\bigg)\Bigg\}.
    \end{multline*}
\end{theorem}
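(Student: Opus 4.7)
The plan is to start by moving the Liouvillian superoperator outside the double integral. Since $A(t) = -iL\{H(t)\}$, Theorem \ref{Double Integral of Commutator of Liouvillian} immediately rewrites
\[
\Omega_2(t) = \tfrac{1}{2}\int_0^t\!\!\int_0^s [A(s),A(r)]\,dr\,ds = \tfrac{1}{2}\,L\!\left\{\int_0^t\!\!\int_0^s [H(r),H(s)]\,dr\,ds\right\},
\]
so that after this first step all the remaining work takes place inside the Liouvillian, on the commutator of ordinary Hamiltonians. Note the order flip from $[H(s),H(r)]$ to $[H(r),H(s)]$ coming from Lemma \ref{Commutator of Liouvillian}; keeping track of this swap is a small but important bookkeeping point for the rest of the argument.

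Next I would decompose $H = H_0 + H_J$ and expand the commutator bilinearly. Since $[H_J,H_J]=0$, this leaves three pieces:
\[
[H(r),H(s)] = [H_0(r),H_0(s)] + [H_0(r),H_J] + [H_J,H_0(s)].
\]
The first piece is handled by Corollary \ref{Integral of non-interacting commutator}, which gives the $\mathbf{I}_j^x$, $\mathbf{I}_j^y$, and $\mathbf{I}_j^z$ contributions already written in the desired form (with coefficients $2i\Omega_j$ for the first two and the $f g$ cross-term for the third). The remaining two pieces together are precisely the object handled by Corollary \ref{Integral of interacting part}, which produces the $[\mathbf{I}_j^x, H_J]$ and $[\mathbf{I}_j^y, H_J]$ contributions.

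The last step is purely algebraic: regroup the resulting sum so that the prefactors match the way the theorem bundles its terms. In the statement, the double integral of $g_j(r)-g_j(s)$ is paired with $2i\Omega_j\,\mathbf{I}_j^x + [\mathbf{I}_j^y, H_J]$, and the double integral of $f_j(r)-f_j(s)$ is paired (with a minus sign) with $2i\Omega_j\,\mathbf{I}_j^y + [H_J,\mathbf{I}_j^x]$. Recovering these pairings requires two small identities: rewriting the $\mathbf{I}_j^y$-coefficient $\int\!\!\int (f_j(s)-f_j(r))\,dr\,ds$ from Corollary \ref{Integral of non-interacting commutator} as $-\int\!\!\int (f_j(r)-f_j(s))\,dr\,ds$, and using $[H_J,\mathbf{I}_j^x] = -[\mathbf{I}_j^x,H_J]$ so that the interacting contribution $\int\!\!\int(f_j(r)-f_j(s))\,dr\,ds\,[\mathbf{I}_j^x,H_J]$ lines up with the same sign convention under the common $-\int\!\!\int(f_j(r)-f_j(s))$ prefactor.

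The only real obstacle I anticipate is this bookkeeping of signs and orderings: the first-magnus-term commutator flips $r \leftrightarrow s$ when passing through the Liouvillian, and the interacting corollary uses the order $[H_J,\mathbf{I}_j^x]$ while the theorem prefers $[H_J,\mathbf{I}_j^x]$ inside the $f$-integral but $[\mathbf{I}_j^y,H_J]$ inside the $g$-integral. Once these sign conventions are reconciled, the assembly into the stated multi-line expression is immediate from the two corollaries and linearity of $L$.
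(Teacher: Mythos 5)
Your proposal follows the paper's proof exactly: apply Theorem \ref{Double Integral of Commutator of Liouvillian} to pull $L$ outside the double integral (with the $r \leftrightarrow s$ flip), decompose $[H(r),H(s)] = [H_0(r),H_0(s)] + [H_J,H_0(s)] + [H_0(r),H_J]$, and invoke Corollaries \ref{Integral of non-interacting commutator} and \ref{Integral of interacting part}. Your explicit sign bookkeeping ($\int_0^t\!\int_0^s (f_j(s)-f_j(r))\,dr\,ds = -\int_0^t\!\int_0^s (f_j(r)-f_j(s))\,dr\,ds$ and $[\mathbf{I}_j^x,H_J] = -[H_J,\mathbf{I}_j^x]$) is correct and simply spells out what the paper compresses into ``and simplify.''
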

\begin{proof}
    We begin with
    $$\Omega_2(t) = \frac{1}{2} \int_0^t \int_0^s \left[A(s),A(r)\right]dr\,ds.$$
    Theorem \ref{Double Integral of Commutator of Liouvillian} gives
    $$\frac{1}{2} L \big\{\int_0^t \int_0^s \left[H(r),H(s)\right]dr\,ds\big\}.$$
    Writing 
    $$\left[H(r),H(s)\right] = \left[H_0(r),H_0(s)\right] + \left[H_J,H_0(s)\right] + \left[H_0(r),H_J\right],$$
    we apply Corollary \ref{Integral of non-interacting commutator} and Corollary \ref{Integral of interacting part} and simplify to get the result.
\end{proof}

\subsection{Implementation in Python}

The results of Theorems \ref{thm:Magnus1stTerm} and \ref{thm:Magnus2ndTerm} are applied in the MagPy function \texttt{lvnsolve}. Given a Hamiltonian of the form (\ref{eqn:Hamiltonian}), an initial density matrix $\rho_0$, and a list of times, this function evolves a density matrix under the Liouville-von Neumann equation and returns the density matrix evaluated across the specified times, starting at $\rho_0$. Currently, the integrals in the Magnus expansion are calculated using SciPy's \texttt{integrate} package.

We introduce the function \texttt{linspace} to construct the list of times. This is a simple wrapper of the NumPy \texttt{linspace} function, using a given step-size instead of a specified number of points. \texttt{MagPy} also has built-in definitions of the Pauli matrices, which are $2\times 2$ NumPy \texttt{ndarrays}: $\sigma_x$ is denoted \texttt{sigmax} and the rest are defined analogously. The following code snippet shows how to apply \texttt{lvnsolve} to a two spin time-dependent Hamiltonian.

\begin{listing}[H]
\inputminted[
frame=lines,
framesep=2mm,
fontsize=\footnotesize,
linenos
]{python}{code/example.py}
\caption{Example of the \texttt{lvnsolve} function.}
\end{listing}

\pagebreak

\section{Numerical Comparison}

In this section we compare the rate of convergence for different variations on Magnus-based techniques for simulating spin systems displaying HOCP. As a reference solution we use one-term Magnus with the integrals approximated using the midpoint method:
$$\int_a^b f(t)\,dt = (b-a)\,f\left(\frac{a+b}{2}\right).$$
\noindent
Our interval of simulation is $[0, 20]$ unless otherwise specified, with a time-step defined by $k=20$. We present our analysis using log-log plots for the maximum error of our approximation from the chosen reference against the time-step.

\subsection{One-term Magnus}
Recall for a single spin density matrix driven by the Liouville-von Neumann equation with a Hamiltonian of the form (\ref{eqn:Hamiltonian}), one-term Magnus approximates the density matrix as follows:
$$\mathbf{x}(t_{n+1}) = \exp{\left\{\Omega(t_{n+1},t_n)\right\}}\,\mathbf{x}(t_n),$$
where
$$\Omega(t_{n+1},t_n) = \int_{t_n}^{t_{n+1}}f_1(s)\,ds\,\sigma_x + \int_{t_n}^{t_{n+1}}g_1(s)\,ds\,\sigma_y + \Omega_1(t_{n+1} - t_n)\,\sigma_z.$$

\begin{figure}[H]
    \centering
    \includegraphics[width=\textwidth]{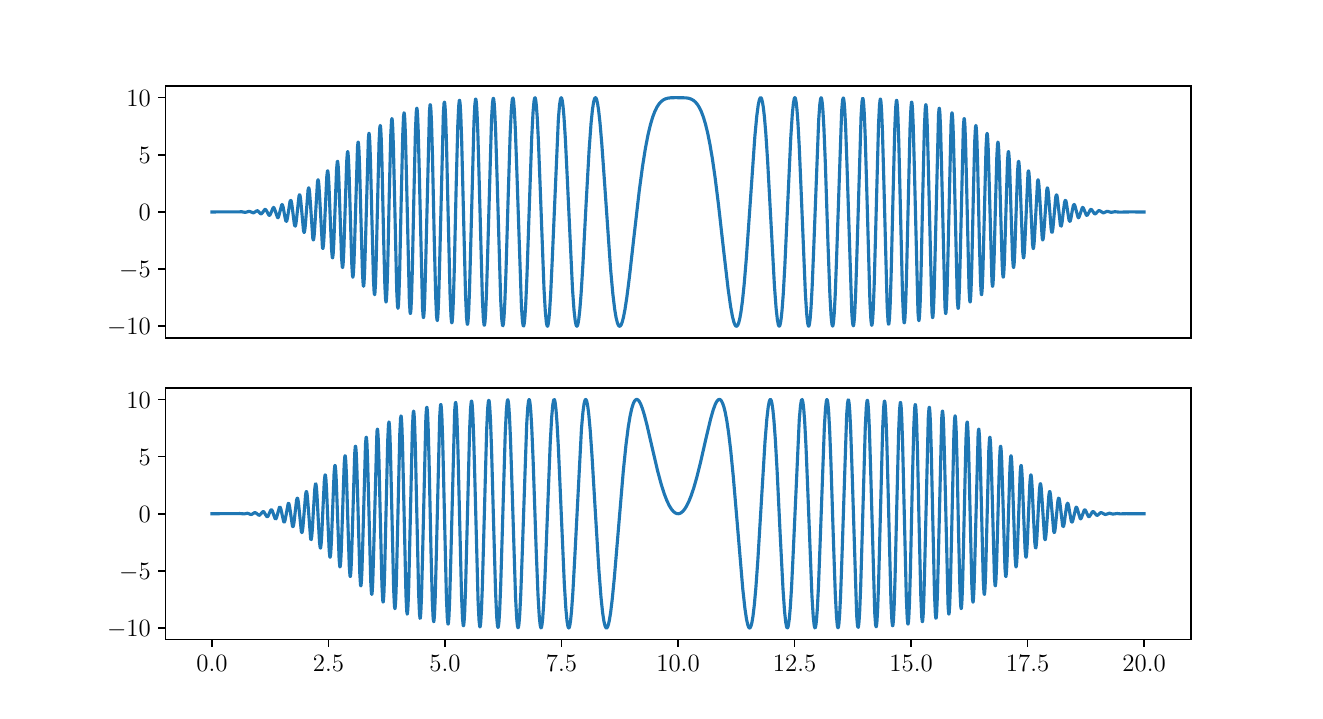}
    \caption{Functions $f_1(t)$ and $g_1(t)$ for (\ref{eqn:hocp_comparison}).}
    \label{a1_f_g}
\end{figure}
\noindent
We apply our analysis of one-term Magnus to the HOCP system defined as follows:
\begin{equation}
\begin{cases}
    \beta_1 = 10,\;\;\gamma_1 = 2,\;\;\Omega_1 = 1,\\
    \rho_0 = \sigma_x.
\end{cases}
\label{eqn:hocp_comparison}
\end{equation}
The integral approximations that we consider are initial point, midpoint, and SciPy's \texttt{integrate.quad}, which uses a Clenshaw-Curtis method with Chebyshev moments \cite{SciPy}. SciPy's method will be treated as the exact integral. We compare the error in these approximations for (\ref{eqn:hocp_comparison}) with $\beta_1$ and 
$\Omega_1$ scaled by 1, 2, $\frac{1}{10}$, and $\frac{1}{100}$. Physically this is equivalent to changing the Hamiltonian's field strength.

We can clearly see in Figure \ref{hocp1_4} the advantage of using a midpoint approximation over initial point. The midpoint approximation maintains second-order convergence, whereas initial point displays approximately first-order convergence for all simulations.

\begin{figure}[H]
    \centering
    \includegraphics[width=\textwidth]{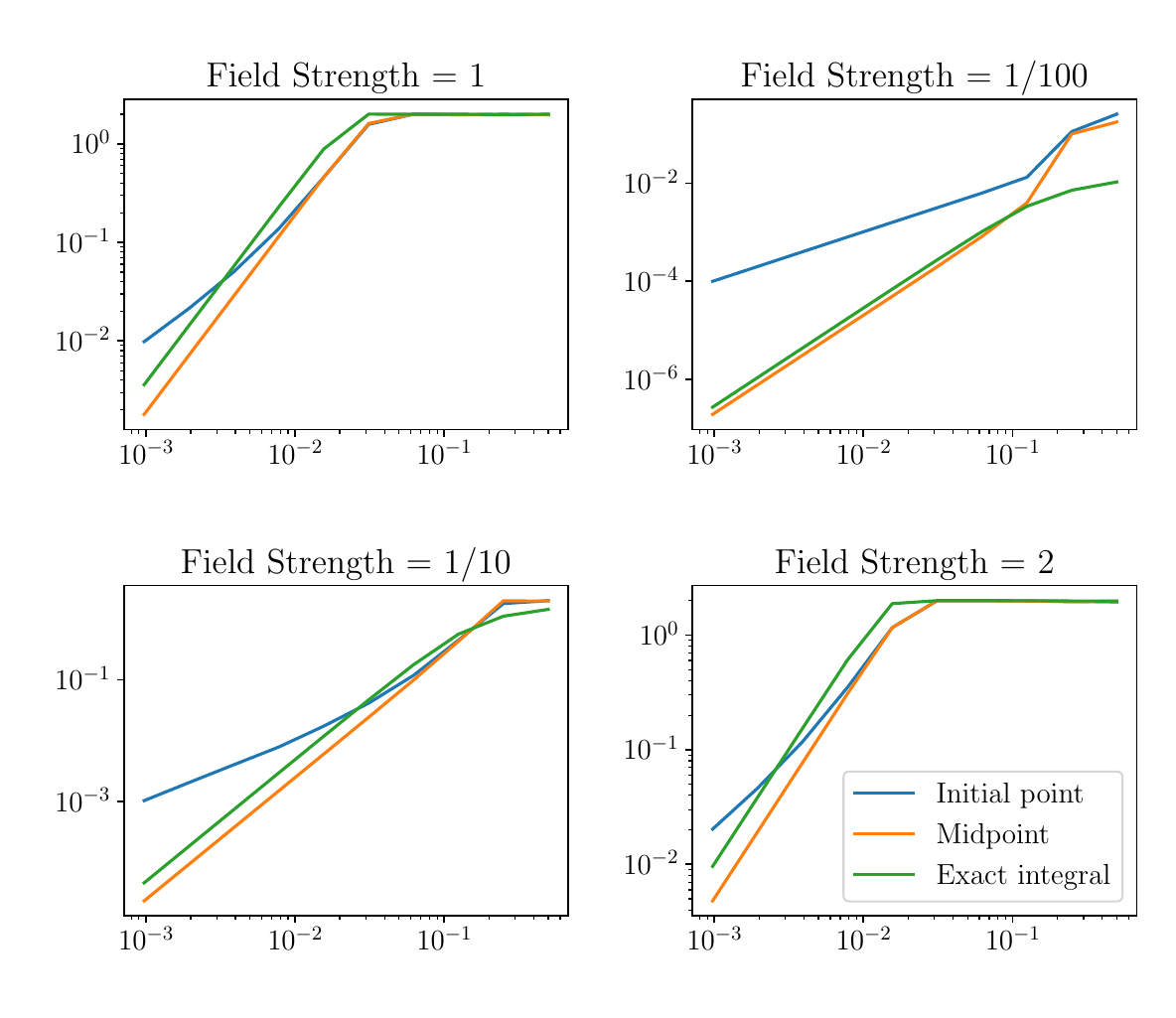}
    \caption{Error against time-step for (\ref{eqn:hocp_comparison}) with the Hamiltonian's field strength scaled four different ways.} 
    \label{hocp1_4}
\end{figure}
\noindent
Interestingly, midpoint approximations for the integral display second order convergence, which is equivalent to the rate of convergence when using the exact integral. However, midpoint begins to converge at a slightly larger time-step than when using the exact integral. For scales $\frac{1}{10}$ and $\frac{1}{100}$ we see that the exact integral method initially converges faster, but is overtaken by midpoint at a time-step of size $10^{-1}$.

It must be emphasised how surprising these results are. We see that it is not necessary to use more accurate approximation techniques for the integrals when the field strength is stronger. Plus, the midpoint approximations actually perform better than the exact integral. The complete opposite behaviour is expected. 

We also see that the smaller the scale of the system, the smaller the error is in the approximation, with a maximum error of $10^{-6}$ being achieved when the system is scaled by $\frac{1}{100}.$

\subsection{Two-term Magnus}

For two-term of Magnus, there are multiple ways to approximate the integrals in the calculation. The double integrals will all be approximated using SciPy's \texttt{integrate.quad} function being treated as the exact integral, and the single integrals will be approximated using the following:
\begin{enumerate}
    \item SciPy's \texttt{integrate.quad} function,
    \item Midpoint,
    \item Gauss-Legendre quadrature of order 3.
\end{enumerate}
Gauss-Legendre quadrature approximates integrals using quadrature weights and roots of Legendre polynomials \cite{JacobiGLQ}. Order 3 gives the following approximation:
$$\int_a^b f(x)\,dx \approx \frac{b-a}{2}\sum_{i=1}^3 w_i\,f\left(\frac{b-a}{2}\xi_i + \frac{a+b}{2}\right),$$
where
\begin{align*}
    \xi_1 &= 0,\;\;\;\;\;\;\;\;\;\,w_1 = \frac{8}{9},\\
    \xi_2 &= \sqrt{\frac{3}{5}},\;\;\;\;\;\,w_2 = \frac{5}{9},\\
    \xi_3 &= -\sqrt{\frac{3}{5}},\;\;\;w_3 = \frac{5}{9}.\\
\end{align*}
We compare the convergence of these methods against one-term Magnus using midpoint integral approximations, using midpoint as a reference solution. System (\ref{eqn:hocp_comparison})

In figure \ref{hocp1_4} we see that both two-term Magnus with SciPy and with order 3 Gauss-Legendre Quadrature display fourth-order convergence. When the system is scaled by $\frac{1}{10}$ and $\frac{1}{100}$, we achieve the accuracy of one-term Magnus with midpoint with a larger time-step of $10^{-2}$. The log-log plots plateau at a maximum error of $10^{-9}$.

Up to a time-step of $10^{-1}$ the exact integral method converges faster than Gauss-Legendre quadrature, but then they display an equal rate of convergence from that point onward. This behaviour is not seen for the system scaled by 1 or 2.

Using the midpoint method to approximate the first term in the two-term Magnus expansion is not any better than one-term Magnus using a midpoint approximation. 

\begin{figure}[H]
    \centering
    \includegraphics[width=\textwidth]{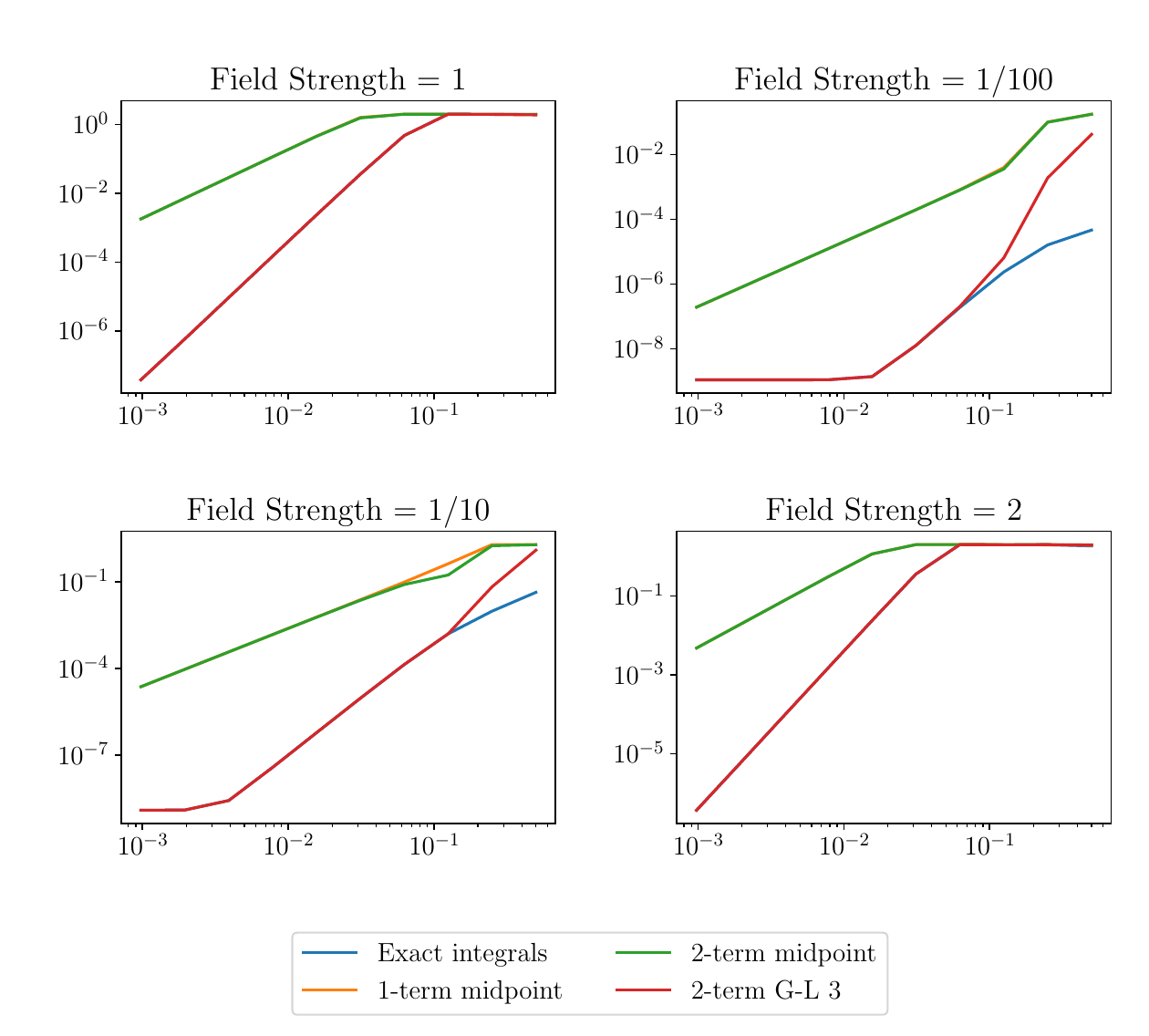}
    \caption{Error against time-step for (\ref{eqn:hocp_comparison}) with the Hamiltonian's field strength scaled four different ways.}
    \label{a1_hocp_mag2}
\end{figure}

\pagebreak

\subsection{Larger Systems}

Consider the following system of two spins:

\bgroup
\def\arraystretch{1.7}
\begin{table}[H]
\centering
\begin{tabular}{|c|c|c|c|}
    \hline
    \diagbox[innerwidth=1.5cm]{$i$}{}
    & $\beta_i$ & $\gamma_i$ & $\Omega_i$ \\
    \hline
    1 & 10 & 2 & 5 \\
    \hline
    2 & -40 & 25 & -12 \\
    \hline
\end{tabular}
\end{table}
\egroup
\noindent
The interacting component and initial condition are
$$H_J = \sigma_x \otimes \sigma_y, \;\;\;\;\;\rho_0 = \sigma_x \otimes I + I \otimes \sigma_y.$$
\noindent
For the non-interacting case, we can measure the z-component of the first spin without any interference from the second spin. 

\begin{figure}[H]
    \centering
    \includegraphics[width=\textwidth]{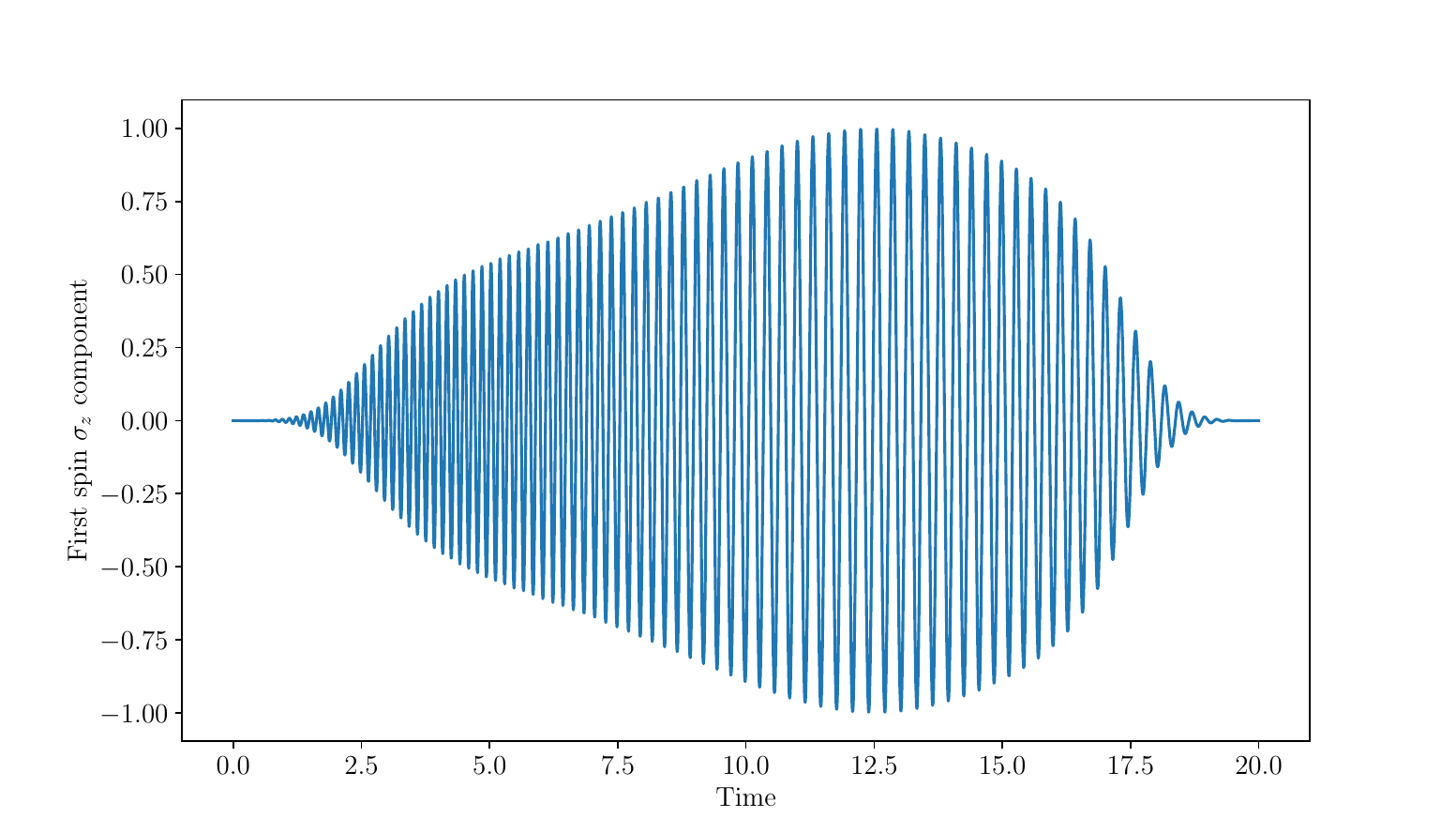}
    \caption{The normalised z-component of the first spin of system of the two-spin system.}
\end{figure}
\noindent
It is not possible to isolate a single spin in an interacting multi-spin system, but we can use a compound operator to visualise the evolution of the system. Specifically, we measure the proportion of the total density matrix in the direction $\sigma_x\otimes\sigma_x.$ Figure \ref{compound} highlights the highly-oscillatory nature of the system, showing how an insufficiently small time-step will cause these oscillations to be missed in the simulation.

\begin{figure}[H]
    \centering
    \includegraphics[width=\textwidth]{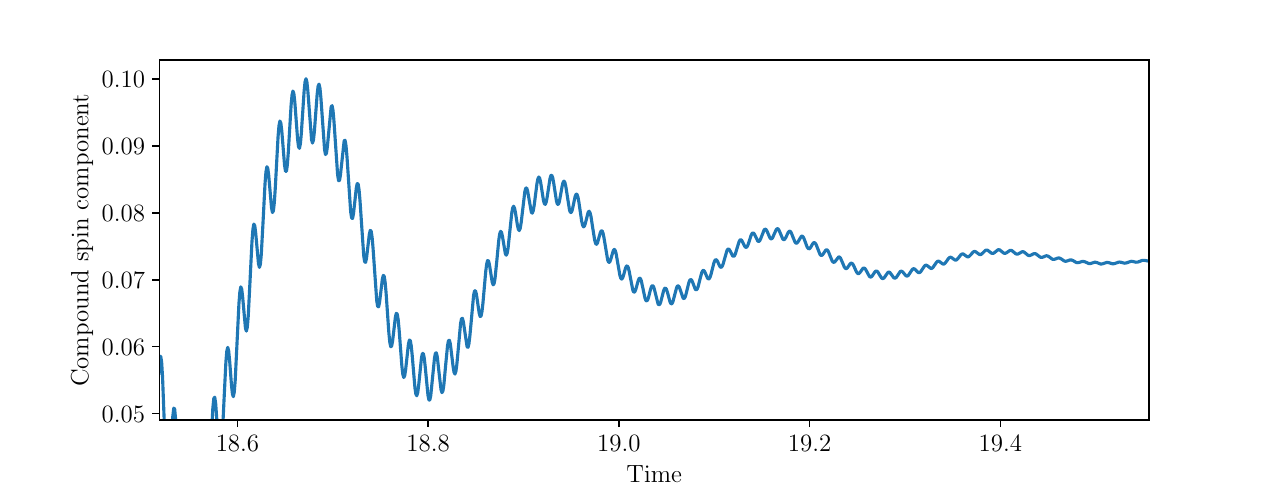}
    \caption{A snippet of the evolution of the compound spin direction of the two particle system.}
    \label{compound}
\end{figure}

\pagebreak

\section{Discussion}

\subsection{Summary}

In this project we have implemented a more efficient method for simulation of quantum spin systems displaying highly oscillatory chirped pulses using a truncated Magnus expansion. The approach has been optimised to handle a specific Hamiltonian common in the work of NMR and MRI. We opted for the Liouvillian picture in which to evolve our spin systems, and thus proved properties of the Liouvillian superoperator to write the matrix-valued integrals of the Magnus expansion in terms of scalar-valued integrals. 

Using the midpoint integral approximation as a reference, we compared the effectiveness of current numerical methods and approximations of the matrix exponential, looking at their rates of convergence and error bounds. This was then compared against our implementation of the Magnus expansion, resulting in us achieve fourth-order convergence---twice as fast as current implementations. We looked at the same system scaled four different ways and found that for systems with smaller oscillations our implementation achieves higher accuracy. We also discussed the Python package QuTiP and its approach to simulating the same spin systems.

This has been collated---along with other relevant functions pertaining o spin system simulation---in the Python package MagPy. The packages main function, \texttt{lvnsolve}, can evolve a density matrix given an initial condition and a Hamiltonian over a specified list of points in time. 

\subsection{Further Work}

A next step in the expansion of the work of this report would be to adapt the function \texttt{lvnsolve} to handle any matrix-valued function as a Hamiltonian. This would enable more systems than those relevant to NMR or MRI to take advantage of the increased accuracy and efficiency of MagPy. In addition to this, we would like to test scaling and splitting methods in conjunction with the Magnus expansion to further increase the effectiveness of our implementation.

It is also worth investigating further the convergence of different integral approximations. Specifically that more accurate techniques for integration do not necessarily result in faster convergence to our reference solution. We do not expect the midpoint approximation to perform better than the SciPy package's more sophisticated approximation techniques.

Testing of systems with more spins, both interacting and not, would aid in further verifying the effectiveness of our implementation, and potentially find more systems to which our method can be adapted. Alongside this, testing of the speed and number of computations required by \texttt{lvnsolve} would determine the complexity of the function.

\pagebreak

\addcontentsline{toc}{section}{Appendix}
\section*{Appendix}

\addcontentsline{toc}{subsection}{The Kronecker Product}
\subsection*{The Kronecker Product}

    The Kronecker product of $A\in\mathbb{C}^{m\times n}$ and $B\in\mathbb{C}^{p\times q}$ is
    $$A \otimes B = \begin{bmatrix}a_{11}B & \cdots & a_{1n}B \\ \vdots & \ddots & \vdots \\ a_{m1}B & \cdots & a_{mn}B\end{bmatrix}\in \mathbb{C}^{pm \times qn}.$$
\noindent
The mixed-product property of the Kronecker product is that for matrices $A,B,C,D$ such that the products $AC$ and $BD$ can be formed, we have 
$$\left(A\otimes B\right)\left(C\otimes D\right) = AC\otimes BD.$$

\addcontentsline{toc}{subsection}{An Error Bound of Scaling and Squaring}
\subsection*{An Error Bound of Scaling and Squaring}

If $||A|| \leq 2^{j-1},$ then
$$ [R_{qq}(A)]^{2j} = e^{A+E},$$
where $E$ is a matrix such that
$$ \frac{\norm{E}}{\norm{A}} \leq 8\left[\frac{\norm{A}}{2^j}\right]^{2q}\left(\frac{(q!)^2}{(2q)!(2q+1)!}\right).$$
And so, we may choose a $(q,j)$ pair such that
$$\frac{\norm{E}}{\norm{A}} \leq \epsilon,$$
for a chosen error tolerance $\epsilon.$

\bgroup
\def\arraystretch{1.7}
\begin{table}[H]
\centering
\begin{tabular}{|c|c|c|c|c|c|}
    \hline
      \diagbox[innerwidth=1.3cm]{\norm{A}}{$\epsilon$}
                   & $10^{-3}$ & $10^{-6}$ & $10^{-9}$ & $10^{-12}$ & $10^{-15}$ \\
      \hline
      $10^{-2}$ & $(1,0)$ & $(1,0)$ & $(2,0)$ & $(3,0)$ & $(3,0)$\\
      \hline
      $10^{-1}$ & $(1,0)$ & $(2,0)$ & $(3,0)$ & $(4,0)$ & $(4,0)$\\
      \hline
      $10^{0}$ & $(2,1)$ & $(3,1)$ & $(4,1)$ & $(5,1)$ & $(6,1)$\\
      \hline
      $10^{1}$ & $(2,5)$ & $(3,5)$ & $(4,5)$ & $(5,5)$ & $(6,5)$\\
      \hline
      $10^{2}$ & $(2,8)$ & $(3,8)$ & $(4,8)$ & $(5,8)$ & $(6,8)$\\
      \hline
      $10^{3}$ & $(2,11)$ & $(3,11)$ & $(4,11)$ & $(5,11)$ & $(6,11)$\\
      \hline
\end{tabular}
\caption{\label{tab:pade}Optimal values for $(q,j)$ with diagonal Padé.}
\end{table}
\egroup
\noindent
Since $[R_{qq}(A/{2^j})]^{2^j}$ requires about $(q+j+\frac{1}{3})n^3$ flops to evaluate, it makes sense to choose a $(q,j)$ pair such that $q+j$ is minimum \cite{Nineteen}. Table (\ref{tab:pade}) shows the optimal scaling and squaring parameters for a fixed \norm{A} and error tolerance $\epsilon$.

\addcontentsline{toc}{subsection}{The Lanczos Algorithm}
\subsection*{The Lanczos Algorithm}
Given a Hermitian matrix $A$ of size $n \times n$, a column vector $b$ of size $n$, and a number of iterations $m$,
\begin{enumerate}
    \item $v_1 = b / \norm{b}_2$.
    \item $w_1 = Av_1 - \langle Av_1,v_1 \rangle v_1$.
    \item For $j = 2, \ldots, m$, do
    \begin{enumerate}[a)]
        \item $\beta_j = \norm{w_{j-1}}$.
        \item $v_j = w_{j-1} / \beta_j$.
        \item $\alpha_j = \langle Av_j,v_j \rangle$.
        \item $w_j = Av_j - \alpha_j v_j - \beta_j v_{j-1}$.
    \end{enumerate}
    \item V is the matrix with columns $v_1, \dots, v_m$, and 
    $$T = \begin{pmatrix} \alpha_1 &  \beta_2 & & & & 0 \\ \beta_2 & \alpha_2 & \beta_3 & & & \\ & \beta_3 & \alpha_3 & \ddots & & \\ & & \ddots & \ddots & \beta_{m-1} & \\ & & & \beta_{m-1} & \alpha_{m-1} & \beta_m \\ 0 & & & & \beta_m & \alpha_m \end{pmatrix}.$$
\end{enumerate}

\begin{listing}
\inputminted[
frame=lines,
framesep=2mm,
fontsize=\footnotesize,
linenos
]{python}{code/krylov_expm.py}
\caption{Lanczos algorithm and Krylov subspace approximation.}
\end{listing}

\pagebreak

\addcontentsline{toc}{section}{Bibliography}
\renewcommand\refname{Bibliography}
\bibliographystyle{plain}
\bibliography{refs}

\begin{thebibliography}{10}

\bibitem{Higham}
A.~H. Al-Mohy and N.~J. Higham.
\newblock {\em A New Scaling and Squaring Algorithm for the Matrix Exponential}.
\newblock The University of Manchester, 2009.

\bibitem{BlanesMagnus}
S.~Blanes, F.~Casas, J.~A. Oteo, and J.~Ros.
\newblock {\em The Magnus expansion and some of its applications}.
\newblock Physics Reports, 2008.

\bibitem{BroxsonKron}
B.~J. Broxson.
\newblock {\em The Kronecker Product}.
\newblock University of North Florida, 2006.

\bibitem{EverlingExpm}
W.~Everling and M.~L. Liou.
\newblock {\em On the evaluation of $e^{AT}$ by power series}.
\newblock IEEE, 1967.

\bibitem{spins}
M.~Foroozandeh.
\newblock {\em Spin dynamics during chirped pulses: applications to homonuclear decoupling and broadband excitation}.
\newblock Journal of Magnetic Resonance, 2020.

\bibitem{SaadPadeKrylov}
E.~Gallopoulous and Y.~Saad.
\newblock {\em On the paralell solution of parabolic equations}.
\newblock RIACS, 1989.

\bibitem{HochbruckKrylov}
M.~Hochbruck and C.~Lubich.
\newblock {\em On Krylov subspace approximations to the matrix exponential operator}.
\newblock SIAM, 1997.

\bibitem{JacobiGLQ}
C.~G.~J. Jacobi.
\newblock {\em Ueber Gauß neue Methode, die Werthe der Integrale näherungsweise zu finden}.
\newblock Journal für Reine und Angewandte Mathematik, 1826.

\bibitem{QuTiP}
J.~R. Johansson, P.~D. Nation, and F.~Nori.
\newblock {\em QuTiP 2: A Python framework for the dynamics of open quantum systems}.
\newblock Computer Physics Communications, 2013.

\bibitem{Lanczos}
C.~Lanczos.
\newblock {\em An Iteration Method for the Solution of the Eigenvalue Problem of Linear Differential and Integral Operators}.
\newblock Journal of Research of the National Bureau of Standards, 1950.

\bibitem{LiouExpm}
M.~L. Liou.
\newblock {\em A Novel Method of Evaluating Transient Response}.
\newblock IEEE, 1966.

\bibitem{MacedoVec}
H.~D. Macedo and J.~N. Oliveira.
\newblock {\em Typing Linear Algebra: A Biproduct-oriented Approach}.
\newblock Science of Computer Programming, 2013.

\bibitem{Magnus}
W.~Magnus.
\newblock {\em On the Exponential Solution of Differential Equations for a Linear Operator}.
\newblock Communications on Pure and Applied Mathematics, 1954.

\bibitem{Mazzi}
G.~Mazzi.
\newblock {\em Numerical Treatment of the Liouville-von Neumann Equation for Quantum Spin Dynamics}.
\newblock University of Edinburgh, 2010.

\bibitem{su}
W.~Pfeifer.
\newblock {\em The Lie Algebra su(N)}.
\newblock Birkh\"auser, Basel, 2003.

\bibitem{spins2}
J.~Randall, A.~M. Lawrence, S.~C. Webster, S.~Weidt, N.~V. Vitanov, and W.~K. Hensinger.
\newblock {\em Generation of high-fidelity quantum control methods for multilevel systems}.
\newblock Physical Review A, 2018.

\bibitem{Suli}
E.~S\"uli and D.~F. Mayers.
\newblock {\em An Introduction to Numerical Analysis}.
\newblock Cambridge University Press, 2003.

\bibitem{Assche}
W.~Van~Assche.
\newblock {\em Padé and Hermite-Padé approximation and orthogonality}.
\newblock Surveys in Approximation Theory, 2006.

\bibitem{Nineteen}
C.~Van~Loan and C.~Moler.
\newblock {\em Nineteen Dubious Ways to Compute the Exponential of a Matrix, Twenty-Five Years Later}.
\newblock SIAM Review, 2003.

\bibitem{VargaPade}
R.~S. Varga.
\newblock {\em On higher order stable implicit methods for solving parabolic partial differential equations}.
\newblock Journal of Mathematical Physics, 1961.

\bibitem{SciPy}
P.~Virtanen et~al.
\newblock {\em {{SciPy} 1.0: Fundamental Algorithms for Scientific Computing in Python}}.
\newblock Nature Methods, 2020.

\end{thebibliography}

\end{document}